\newtheorem{theorem}{Theorem}
\newtheorem{lem}{Lemma}
\newtheorem{prop}{Proposition}
\newtheorem{example}{Example}
\newtheorem{remark}{Remark}
\newcommand{\bF}{{\mathbb F}}
\newcommand{\C}{{\mathcal{C}}}
\begin{document}

\title{The Weight Hierarchies of Linear Codes from Simplicial Complexes}

\author{Chao Liu,\,\, Dabin Zheng,\,\, Wei Lu,\,\, Xiaoqiang Wang
\thanks{C. Liu, D. Zheng, W. Lu, X. Wang are with Hubei Key Laboratory of Applied Mathematics, Faculty of Mathematics and Statistics, Hubei University, Wuhan 430062, China.
 Email: chliuu@163.com; dzheng@hubu.edu.cn; d20230130@hubu.edu.cn; waxiqq@163.com. The corresponding author is Dabin Zheng. The research of Dabin Zheng and Xiaoqiang Wang was supported by NSFC under Grant
 Numbers 62272148, 11971156, 12001175.}}

\date{}
\maketitle

\begin{abstract}
The study of the generalized Hamming weight of linear codes is a significant research topic in coding theory as it conveys the structural information of the codes and determines their performance in various applications. However, determining the generalized Hamming weights of linear codes, especially the weight hierarchy, is generally challenging. In this paper, we investigate the generalized Hamming weights of a class of linear code $\C$ over $\bF_q$, which is constructed from defining sets. These defining sets are either special simplicial complexes or their complements in $\bF_q^m$. We determine the complete weight hierarchies of these codes by analyzing the maximum or minimum intersection of certain simplicial complexes and all $r$-dimensional subspaces of $\bF_q^m$, where $1\leq r\leq {\rm dim}_{\bF_q}(\C)$.
\end{abstract}

\vskip 6pt
\noindent {\it Keywords.} Linear code, generalized Hamming weight, simplicial complex, weight hierarchy.
\vskip6pt
\noindent {\it  2010 Mathematics Subject Classification.}\quad  94B05, 94B15

\vskip 30pt

\section{Introduction}

Let $q$ be a power of a prime number and ${\mathbb F}_q$ be a finite field with $q$ elements. An $[n,k,d]$ linear code $\C$ over the finite field  $\bF_q$  is a $k$-dimensional subspace of $\bF_q^n$
with length $n$ and minimum Hamming distance $d$. For each $r$ with $1\leq r\leq k$, let $[\C, r]$ be the set of all $r$-dimensional $\bF_q$-subspaces of $\C$. For each $H\in [\C, r]$, the {\em support} of $H$, denoted by supp($H$), is the set of not-always-zero bit positions of $H$, i.e.,
$$ {\rm supp}(H)=\left\{ i \, :\, 1\leq i\leq n, \,\, c_i\neq 0  \,\, {\rm for \,\, some} \,\, (c_1,c_2,\cdots,c_n)\in H \right\}.$$
The $r$-th {\em generalized Hamming weight} (GHW) of $\C$ is defined by
	$$ d_r(\C)= {\rm min} \left\{ \,|{\rm supp}(H)|\,:\, H \in [ \C, r]\, \right\},$$
where $|{\rm supp}(H)|$ denotes the cardinality of the set ${\rm supp}(H)$. The set $\left\{ d_1(\C), d_2(\C),\cdots, d_k(\C)\right\}$ is called the {\em weight hierarchy}
of $\C$. It is worth noting that $d_1(\C)$ is just the minimum Hamming distance of~$\C$.

The notion of GHWs was introduced by Helleseth et al.~\cite{Hellesethetal1977} and Kl{\o}ve~\cite{Klove1978}, which was a natural generalization of the minimum Hamming distance of linear codes. The GHWs of linear codes provide fundamental information of linear codes, which is important in many applications. In 1991, Wei~\cite{Wei1991} first gave a series of beautiful results on GHWs of linear codes and used them to characterize the cryptography performance of linear codes over the wire-tap channel of type II. From then on, the GHWs of linear codes have become an interesting research object in both theory and applications. The GHWs were used to deal with t-resilient functions, and the trellis or branch complexity of linear codes~\cite{Chor1985,Geometric}. Apart from these cryptographic applications, the GHWs also were used to computation of the state and branch complexity profiles of linear codes~\cite{Forney1994,Kasamietal1993}, indication of efficient ways to shorten linear codes~\cite{Hellesethetal19951}, determination of the erasure list-decodability of linear codes~\cite{Guruswami2003}, etc..

In the past two decades, an important progress has been made in the study of generalized Hamming weights of linear codes. Authors in~\cite{Ashikhmin1999,Cohen1994,Hellesethetal19951} gave the general lower and upper bounds on GHWs of linear codes. The values of GHWs have been determined or estimated for many classes of linear codes, such as Hamming codes~\cite{Wei1991}, Reed-Muller codes~\cite{Heijnen1998,Wei1991}, binary Kasami codes~\cite{Hellesethetal19952}, Melas and dual Melas codes~\cite{vandergeer1994}, some BCH codes and their duals \cite{Cheng1997,Duursma1996,Feng1992,Moreno1998,Shim1995,vandergeer19942,vandergeer1995}, some trace codes~\cite{Cherdien2001,Stichtenoth1994,vandergeer19952,vandergeer1996}, some algebraic geometry codes \cite{Yang1994} and cyclic codes \cite{Lishuxing2017,Ge2016,Yang2015}. However, it is very difficult to determine the complete weight hierarchies of linear codes, and to the best of our knowledge, only a few linear codes have known complete weight hierarchies so far.

On the other hand, the parameters and Hamming weight distributions of linear codes constructed using definition sets have been extensively and deeply studied in recent years, leading to fruitful results. Li and Mesnager~\cite{LiMesnager2020} surveyed the results on linear codes constructed from cryptographic functions. Recently, there has been some attention paid to the generalized Hamming weights of linear codes constructed from defining sets. Jian et al.~\cite{Jian2017} first investigated the generalized Hamming weights of linear codes constructed from skew sets. Li~\cite{Li2018}, Li~\cite{Li2021} and Liu et al.~\cite{LiuZheng2023}, Liu and Wang~\cite{LiuWang2020}, Li and Li~\cite{LiLi2021,LiLi2022} and Li et al.~\cite{Likangquan2022} investigated the generalized Hamming weights of certain classes of linear codes from defining sets. These sets are associated with quadratic functions, cyclotomic classes, or cryptographic functions. Compared to the abundant findings on the weight distributions of linear codes constructed by definition sets, there are very few results on the generalized Hamming weight hierarchies of this class of linear codes.

It is both interesting and challenging to find a suitable defining set of linear codes such that the Hamming weight hierarchy of the corresponding linear codes can be derived. Recently, some classes of linear codes constructed from simplicial complexes were deeply studied and their weight distributions were obtained~\cite{Chang2018,Hyun2020,Hu2023}. In this paper, we study the generalized Hamming weights of the linear codes constructed from the simplicial complexes of $\bF_q^m$, and determine the complete weight hierarchies of these classes of linear codes, whose defining sets are some special simplicial complexes or their complements in $\bF_q^m$. The key point is to determine the maximum or minimum intersection of the simplicial complexes and all $r$-dimensional subspaces of the discussed codes for $1\leq r\leq k$, where $k$ is the dimension of the object code.

The rest of this paper is organized as follows: In Section 2, we introduce simplicial complexes in $\bF_q^m$ and the linear codes from the defining-sets and their generalized Hamming weights. Section 3 determines the weight hierarchy of several a class of linear codes from simplicial complexes. These complexes are generated by the support sets of one, two, three, or a finite number of maximal elements of the simplicial complex, respectively. Section 4 determines the weight hierarchy of a class of linear codes from the complements of simplicial complexes, which are generated by the support sets of one, two, or a finite number of maximal elements, respectively. Finally, Section 5 concludes the paper.

\section{Preliminaries}\label{sec:preli}

\subsection{Linear codes constructed from defining-sets}

Let $q$ be a power of a prime and $\bF_q$ be a finite field of $q$ elements. Let ${\rm Tr}$ denote the trace function from ${\mathbb F}_{q^m}$ to ${\mathbb F}_{q}$.
For a set $D=\{d_1,d_2,\cdots, d_n\}\subset {\mathbb F}_{q^m}$, Ding and Niederreiter~\cite{Ding2007} proposed a generic method for constructing linear codes as follows:
\begin{equation*}\label{defingset}
\C_D=\left\{ {\bf c}(x) = \left( {\rm Tr}( xd_1), {\rm Tr}(xd_2), \cdots, {\rm Tr}(xd_n) \right), \,\, x\in {\mathbb F}_{q^m} \right\}.
\end{equation*}
Then, $\C_D$ is a linear code over $\bF_q$ of length $n$. The set $D$ is called the defining set of $\C_D$. This construction is fundamental since every liner code over $\bF_q$ can be represented as $\C_D$ for some defining set $D$~\cite{Xiang2016}. Recently, many good linear codes have been constructed by choosing proper defining sets $D$. It is known that $\bF_{q^m}$ is isomorphic to $\bF_q^m$. Fixing a basis of $\bF_{q^m}$ over $\bF_q$, any element $a\in \bF_{q^m}$ is identical to its coordinate vector ${\bf a}\in \bF_q^m$. So, the set $D \subset {\mathbb F_{q^m}}$ is identical to a set ${\bf D} \subset {\mathbb F_q^m}$ with respect to this basis. It is easy to show that $\C_D$ is equivalent to
\begin{equation}\label{defineset}
\C_{\bf D}=\{{\bf c}({\bf x})=({\bf x}\cdot {\bf d})_{{\bf d}\in {\bf D}} : {\bf x} \in {\mathbb F_q^m}\},
\end{equation}
where ${\bf D} = \{\bf d_1, d_2, \cdots, d_n\}$. Let $G$ be an $m\times n$ matrix with ${\bf d_i}'s$ as its columns, i.e., $G=({\bf d_1, d_2, \cdots, d_n})$. Then $G$ is the generator matrix of
$\C_D$ if the rank of $G$ is $m$. Please refer to \cite{Xiang2016} for a detailed proof.

\subsection{Simplicial complexes of $\bF_q^m$}

We introduce the notion of simplicial complexes of ${\mathbb F_q^m}$. For two vectors ${\bf u}=(u_1,u_2,\dots,u_m)$ and ${\bf v}=(v_1,v_2,\dots,v_m)$ in ${\mathbb F_q^m}$, we say that ${\bf u}$ covers ${\bf v}$, denoted ${\bf v}\preceq {\bf u}$, if ${\rm Supp}({\bf v})\subseteq {\rm Supp}({\bf u})$, where ${\rm Supp}({\bf u})= \{ i\, :\, u_i\neq 0,\, 1\leq i\leq m\}$. A subset $\Delta$ of ${\mathbb F_q^m}$ is called a simplicial complex if
${\bf u}\in \Delta$ and ${\bf v} \preceq {\bf u}$ imply that ${\bf v}\in \Delta$. An element ${\bf u}$ in $\Delta$ is said to be maximal if there is no element ${\bf v}\in {\Delta}$ such that ${\rm Supp}({\bf u}) \subset {\rm Supp}({\bf v})$. For a simplicial complex $\Delta$ of ${\mathbb F_q^m}$, the distinct maximal elements in $\Delta$ may have the same support set and this is different from the case in $\bF_2^m$. Let $ S_1, S_2,\dots, S_{\ell}$ be the support sets of all maximal elements of $\Delta$, and these support sets are uniquely determined by $\Delta$, denoted ${\rm supp}(\Delta)$. We say that $\Delta$ is generated by ${\rm supp}(\Delta)=\{ S_1, S_2,\dots, S_{\ell}\}$, that is,
\[ \Delta = \langle S_1, S_2,\dots, S_{\ell} \rangle = \left\{ {\bf u}\in \bF_q^m \,:\, {\rm supp} ({\bf u})\,\, {\rm is \,\, contained \,\, in \,\, some \,\,} S_i, 1\leq i\leq \ell \right\} .\]

If ${\rm supp}(\Delta)=\emptyset$, then set $\Delta=\{0\}$. If ${\rm supp}(\Delta)$ has only one element, such as $S_1$, then it is easy to verify that $\Delta=\langle S_1\rangle$ is a subspace of $\bF_q^m$ with dimension $|S_1|$. In general, if ${\rm supp}(\Delta)=\{ S_1, S_2,\dots, S_{\ell}\}$, then
$$\Delta = \langle S_1, S_2,\dots, S_{\ell} \rangle = \cup_{i=1}^\ell \langle S_i \rangle,$$
which is the union of $\ell$ subspaces of $\bF_q^m$ with dimensions $|S_i|$.

Note that the above definition of simplicial complexes of $\bF_q^m$ is from~\cite{Hu2023} and it is a generalization of the definition of simplicial complexes of $\bF_2^m$~\cite{Hyun2020}, and is different from the two definitions given in~\cite{Hyun2019} for $\bF_p^m$ and in~\cite{Shi2021} for $\bF_3^m$.

\begin{example}
Let $\Delta=\langle S_1, S_2 \rangle$ be a simplicial complex of ${\bF_q^4}$, where $S_1=\{1, 2\}$ and $S_2=\{2, 3\}$. Then
$$\langle S_1\rangle=\left\{ (a,b,0,0)\,:\, a, b\in \bF_q \right\}\,\, {\rm and } \,\, \langle S_2\rangle=\left\{ (0,c,d,0)\,:\, c, d\in \bF_q \right\}. $$
It is clear that $\langle S_1\rangle$  and $\langle S_2\rangle$ are  $2$-dimensional subspaces of ${\bF_q^4}$ and $\Delta = \langle S_1\rangle  \cup \langle S_2\rangle.$
\end{example}

\subsection{Generalized Hamming weights of linear codes from defining-set constructions}

In this subsection, we recall the generalized Hamming weights of linear codes obtained from the defining-set constructions. Let $\Delta=\langle S_1, S_2, \cdots, S_\ell \rangle$ be a simplicial complex of
$\bF_q^m$ and $\C_{\Delta}$ be a linear code defined in (\ref{defineset}) with dimension $k$. It is easy to see that $k=m- {\rm dim}_{\bF_q} K$, where
$$K=\left\{ {\bf x} \in {\mathbb F_q^m}\, :\, ( {\bf u} \cdot {\bf x} )_{{\bf u} \in \Delta} = 0 \right\}.$$
Li, Chen and Qu in~\cite{Likangquan2022} provided a formula for calculating the generalized Hamming weight of $\C_D$ as follows.
\begin{prop}[\cite{Likangquan2022}]{\label{GHW}}
Let $K$ be as mentioned earlier, and let $\C_{\Delta}$ be a linear code defined in (\ref{defineset}) with dimension $m-{\rm dim}_{\bF_q} K$. Then, for any integer $1\leq r\leq m-{\rm dim}_{\bF_q} K$, the $r$-th generalized Hamming weight of $\C_{\Delta}$ is
\begin{equation}\label{eq:rghw}
d_r(\C_\Delta) =n-{\rm max}\, \left\{ | \Delta \cap H^{\perp}| : H \in [{\mathbb F_q^m}, r]\,\, {\rm and} \,\, H\cap K=\{\bf 0 \} \right\},
\end{equation}
where $[{\mathbb F_q^m}, r]$ denotes the set of all $r$-dimensional subspaces of ${\mathbb F_q^m}$ and $H^{\perp}\subset \bF_q^m$ is the dual subspace of $H$.
\end{prop}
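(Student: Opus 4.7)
The plan is to translate the definition of $d_r(\C_\Delta)$ through the natural surjection $\phi:\bF_q^m\to\C_\Delta$ given by $\phi({\bf x})=({\bf x}\cdot{\bf d})_{{\bf d}\in\Delta}$, and then identify supports with complements of dual subspaces inside $\Delta$.

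First, I would set up the correspondence between $r$-dimensional $\bF_q$-subspaces of $\C_\Delta$ and certain $r$-dimensional subspaces of $\bF_q^m$. The map $\phi$ is $\bF_q$-linear with kernel exactly $K=\{{\bf x}\in\bF_q^m:({\bf u}\cdot{\bf x})_{{\bf u}\in\Delta}={\bf 0}\}$, so $\phi$ induces an isomorphism $\bF_q^m/K\cong\C_\Delta$. Consequently, any $r$-dimensional subspace $V\in[\C_\Delta,r]$ equals $\phi(H)$ for some subspace $H\subset\bF_q^m$ with $\dim H=r$ and $H\cap K=\{{\bf 0}\}$, and conversely every such $H$ gives an $r$-dimensional $V=\phi(H)$. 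This reduces the optimization over $[\C_\Delta,r]$ to an optimization over the set of $r$-dimensional $H\subset\bF_q^m$ satisfying $H\cap K=\{{\bf 0}\}$.

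Next, I would compute $|{\rm supp}(\phi(H))|$ directly. By definition, a coordinate indexed by ${\bf d}\in\Delta$ lies in ${\rm supp}(\phi(H))$ iff there exists ${\bf x}\in H$ with ${\bf x}\cdot{\bf d}\neq 0$; equivalently, ${\bf d}\notin H^{\perp}$. Hence
\begin{equation*}
|{\rm supp}(\phi(H))|=|\{{\bf d}\in\Delta:{\bf d}\notin H^{\perp}\}|=n-|\Delta\cap H^{\perp}|.
\end{equation*}
Taking the minimum of the left-hand side over admissible $H$ is the same as maximizing $|\Delta\cap H^{\perp}|$, which yields the formula \eqref{eq:rghw}.

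There is essentially no hard step here, only bookkeeping: the main subtlety is making sure that the admissibility condition is exactly $H\cap K=\{{\bf 0}\}$ (so that $\phi|_H$ is injective and $\dim\phi(H)=r$), and that every $V\in[\C_\Delta,r]$ arises this way. Both follow from the first isomorphism theorem applied to $\phi$, so the proof reduces to writing out these two observations cleanly and substituting into the definition of $d_r(\C_\Delta)$.
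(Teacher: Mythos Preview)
Your argument is correct and is the standard one: pull back $r$-dimensional subcodes along the surjection $\phi:\bF_q^m\to\C_\Delta$, use $\ker\phi=K$ to see that the admissible preimages are exactly the $H\in[\bF_q^m,r]$ with $H\cap K=\{\mathbf 0\}$, and identify the non-support coordinates of $\phi(H)$ with $\Delta\cap H^\perp$. Note that the paper itself does not prove this proposition; it is quoted from~\cite{Likangquan2022}, so there is no in-paper proof to compare against, but your write-up matches the usual derivation of this formula.
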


\begin{remark}
Let $\Delta$ be a simplicial complex and ${\rm Supp}(\Delta) = \{ S_1, S_2, \cdots, S_\ell \}$. Let $[m]$ denote the set $\{1, 2, \cdots, m\}$. If $ \cup_{i=1}^\ell S_i \subsetneq [m]$, then set $S^\prime = [m]\setminus \cup_{i=1}^\ell S_i$, and one can verify that $K = \langle S^\prime \rangle$, where $\langle S^\prime \rangle$ is a simplicial complex generated by the support set $S^\prime$, which is a subspace of $\bF_q^m$.
To determine $d_r(\C_\Delta)$, from (\ref{eq:rghw}) we need to calculate the maximum intersection of $\Delta$ and $H^\perp$ for all $r$-dimensional subspaces $H\subset \bF_q^m$ satisfying $H\cap K=\{ \bf 0\}$. For convenience, we assume that $\cup_{i=1}^\ell S_i=[m]$, i.e., the dimension of $\C_\Delta$ equals $m$, then $K=\{\bf 0 \}$ and $H\cap K=\{ \bf 0\}$ for any $r$-dimensional subspace of ${\mathbb F_q^m}$. In the following, we discuss the generalized Hamming weights of $\C_\Delta$ under this condition, which does not affect the conclusions.
\end{remark}

\section{The weight hierarchy of a linear code $\C_{\Delta}$}

In this section, we will discuss the weight hierarchy of a class of linear codes with the following form:
\begin{equation}{\label{definingset3}}
\C_{\Delta}=\left\{ {\bf c}({\bf x})=\left( {\bf x}\cdot {\bf v} \right)_{{\bf v}\in \Delta} : {\bf x} \in {\mathbb F_q^m} \right\},
\end{equation}
where $\Delta$ is a simplicial complex of ${\mathbb F_q^m}$ and ${\bf x}\cdot {\bf v}= \sum_{i=1}^m x_iv_i$ represents the standard inner product of vectors. By Proposition~\ref{GHW}, the key to obtain the $r$-th generalized Hamming weight of $\C_{\Delta}$ is to determine the maximum size of the intersections of the duals of all $r$-dimensional subspace $H$ with the simplicial complex $\Delta$. Let $\Delta=\langle S_1, S_2, \dots, S_\ell \rangle$, where $S_i$ is a support set of some maximal element of $\Delta$ for $i=1, 2, \cdots, \ell$ and $\cup_{i=1}^\ell S_i =[m]$. From Section~\ref{sec:preli} we know that $\Delta= \cup_{i=1}^\ell \langle S_i\rangle$, where $\langle S_i \rangle$ is a simplicial complex generated by the support set $S_i$, which is an $|S_i|$-dimensional subspace of $\bF_q^m$. Since $\cup_{i=1}^\ell S_i =[m]$ one can see that $K=\{ {\bf x} \in {\mathbb F_q^m}\, :\, ( {\bf u} \cdot {\bf x} )_{{\bf u} \in \Delta} =0 \}=\{\bf 0 \}$. Then, by Proposition~\ref{GHW} and the principle of inclusion-exclusion we have

\begin{equation}\label{eq:drc}
d_r(\C_{\Delta})=|\Delta| - \mathop{ \rm max}\limits_{H\in [\bF_q^m, r]} \left( \sum_{t=1}^\ell (-1)^{t-1} \sum\limits_{1\leq i_1<\cdots<i_t\leq \ell} | \cap_{j=1}^t \langle S_{i_j}\rangle \cap H^\perp|\right),
\end{equation}
where $H^{\perp}$ is the dual space of $H$. From this formula, it is difficult to determine the exact value of $d_r(\C_{\Delta})$ for a general simplicial complex $\Delta$. But the weight hierarchy of $\C_{\Delta}$ can be explicitly determined when $\Delta$ is in some special cases.

First, we consider the case that ${\Delta}$ is generated by a support set $S$ of one maximal element. Since $S=[m]$ we know that
$\Delta={\mathbb F_q^m}$. So, the following result is obvious.
\begin{theorem}{\label{thm1}}
Let $\Delta=\langle S \rangle$ be a simplicial complex of ${\mathbb F_q^m}$ with $S=[m]$, where $S$ is a support set of some maximal element of $\Delta$. Let $\C_{\Delta}$ be a linear code defined in (\ref{definingset3}). Then
$\C_{\Delta}$ is a $[q^m, m, q^m-q^{m-1}]$ code and has the weight hierarchy as follows:
\begin{equation*}
d_r(\C_{\Delta})= q^m -q^{m-r},\,\, 1\leq r\leq m.
\end{equation*}
\end{theorem}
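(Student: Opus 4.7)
The plan is to reduce the theorem to a direct application of Proposition~\ref{GHW} after identifying $\Delta$ explicitly. The hypothesis $S=[m]$ forces $\Delta=\langle [m]\rangle$, which by the definition of $\langle\cdot\rangle$ consists of all vectors in $\bF_q^m$ whose support is contained in $[m]$; that is, $\Delta=\bF_q^m$. Hence the length of $\C_\Delta$ is $n=|\Delta|=q^m$, and from the remark preceding this section we have $K=\{\mathbf{0}\}$, so $\dim_{\bF_q}\C_\Delta=m-\dim_{\bF_q}K=m$.

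Next I would apply the formula (\ref{eq:rghw}) from Proposition~\ref{GHW}. Since $K=\{\mathbf{0}\}$, the side condition $H\cap K=\{\mathbf{0}\}$ is automatic, so the formula reads
\begin{equation*}
d_r(\C_\Delta)=q^m-\max\bigl\{|\Delta\cap H^\perp|:H\in[\bF_q^m,r]\bigr\}.
\end{equation*}
Because $\Delta=\bF_q^m$, the intersection $\Delta\cap H^\perp$ is simply $H^\perp$, whose cardinality equals $q^{m-r}$ for every $r$-dimensional subspace $H$ of $\bF_q^m$. Therefore the maximum is trivially $q^{m-r}$, giving $d_r(\C_\Delta)=q^m-q^{m-r}$ for all $1\leq r\leq m$.

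In particular, setting $r=1$ yields the minimum distance $d_1(\C_\Delta)=q^m-q^{m-1}$, which together with the already-computed length $q^m$ and dimension $m$ confirms that $\C_\Delta$ is a $[q^m,m,q^m-q^{m-1}]$ code. There is no genuine obstacle here: the only things to check are the identification $\Delta=\bF_q^m$ and the standard fact that every $r$-dimensional subspace of $\bF_q^m$ has an $(m-r)$-dimensional dual of size $q^{m-r}$. Both are immediate, so the proof is essentially a one-line specialisation of Proposition~\ref{GHW}.
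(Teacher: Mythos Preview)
Your proposal is correct and matches the paper's approach exactly: the paper simply observes that $S=[m]$ forces $\Delta=\bF_q^m$ and declares the result obvious, while you spell out the same one-line specialisation of Proposition~\ref{GHW} in slightly more detail.
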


\begin{example}
Let $\C_{\Delta}$ be a binary linear code  defined in (\ref{definingset3}), where $\Delta=\langle S \rangle $ is a simplicial complex of ${\mathbb F_2^4}$ generated by the support set $S=\{1,2,3,4\}$. By the help of Magma, we know that $\C_{\Delta}$ is an $[16, 4, 8]$ code and has the weight hierarchy as follows: $d_1=8, d_2=12, d_3=14,d_4=15$.
This result is consistent with Theorem \ref{thm1}.
\end{example}

Next, we consider the case that ${\Delta}$ is generated by the support sets of two maximal elements.

\begin{theorem}{\label{thm2}}
Let $\Delta=\langle S_1, S_2\rangle$ be a simplicial complex of ${\mathbb F_q^m}$ with $S_1\cup S_2=[m]$, where $S_1$ and $S_2$ are the support sets of two maximal elements of $\Delta$, and
$1\leq |S_1|\leq |S_2|<m$. Let $\C_{{\Delta}}$ be a linear code defined in~(\ref{definingset3}). Then, $\C_{\Delta}$ is a $[q^{|S_1|}+ q^{|S_2|}-q^{|S_1\cap S_2|}, m, q^{|S_1|}-q^{|S_1\cap S_2|+m-|S_2|-1}]$ code
and has the weight hierarchy given in Table~\ref{table:twomaxied}.	
\begin{table}[h]
    \centering
    \footnotesize
    \renewcommand{\arraystretch}{1.5}
    \setlength{\tabcolsep}{5pt}
	\caption{The weight hierarchy of $\C_{\Delta}$ in Theorem~\ref{thm2}}
    \label{table:twomaxied}
	\scalebox{1}{
   \begin{tabular}{ll}
	 \toprule
	\textbf{The weight hierarchy} & \textbf{Rang of $r$} \\
	\midrule
	$q^{|S_1|}-q^{|S_1\cap S_2|+m-r-|S_2|}$ & $1\leq r\leq m-|S_2|$ \\
	$q^{|S_1|}+q^{|S_2|}-q^{|S_1\cap S_2|}-q^{m-r}$ & $m-|S_2|\leq r\leq m$\\
	\bottomrule
\end{tabular}
}
\end{table}
\end{theorem}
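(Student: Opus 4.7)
The plan is to apply Proposition~\ref{GHW} directly: since $S_1\cup S_2=[m]$, we have $K=\{{\bf 0}\}$, so $\C_\Delta$ has dimension $m$ and
\[ d_r(\C_\Delta)=|\Delta|-\max_W |\Delta\cap W|, \]
where $W$ ranges over the $(m-r)$-dimensional subspaces of $\bF_q^m$. Writing $V_i=\langle S_i\rangle$, so that $\dim V_i=|S_i|$ and $V_1\cap V_2=\langle S_1\cap S_2\rangle$, inclusion-exclusion on the union $\Delta=V_1\cup V_2$ gives both the length $n=|\Delta|=q^{|S_1|}+q^{|S_2|}-q^{|S_1\cap S_2|}$ and, for each $W$,
\[ |\Delta\cap W|=q^{a_1}+q^{a_2}-q^{a_{12}}, \]
where $a_i=\dim(V_i\cap W)$ and $a_{12}=\dim(V_1\cap V_2\cap W)$. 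The three constraints driving the optimisation are $a_i\le |S_i|$, $a_{12}\le |S_1\cap S_2|$, and most importantly $a_1+a_2-a_{12}\le m-r$, the last coming from $V_1\cap W+V_2\cap W\subseteq W$ together with $(V_1\cap W)\cap(V_2\cap W)=V_1\cap V_2\cap W$.

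For the easier range $m-|S_2|\le r\le m$ (second row of the table), the elementary inequality $(q^{a_1}-q^{a_{12}})(q^{a_2}-q^{a_{12}})\ge 0$ rearranges to $q^{a_1}+q^{a_2}-q^{a_{12}}\le q^{a_1+a_2-a_{12}}\le q^{m-r}$. Sharpness is witnessed by any subspace $W\subseteq V_2$ of dimension $m-r$ (which exists since $m-r\le |S_2|$): for such a $W$ one has $V_1\cap W=V_1\cap V_2\cap W$, hence $a_1=a_{12}$ and $|\Delta\cap W|=q^{m-r}$. Subtracting from $|\Delta|$ yields the second row.

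For $1\le r\le m-|S_2|$ (first row), the achievability step is to choose $W\supseteq V_2$ with $\dim W=m-r$, possible because $|S_2|\le m-r$. Since $V_1+V_2=\bF_q^m$ (from $S_1\cup S_2=[m]$), the map $V_1\cap W\to \bF_q^m/V_2$ sending $v\mapsto v+V_2$ is surjective with kernel $V_1\cap V_2$, so $a_1=m-r-|S_2|+|S_1\cap S_2|$, while $a_2=|S_2|$ and $a_{12}=|S_1\cap S_2|$, producing $|\Delta\cap W|=q^{|S_2|}+q^{m-r-|S_2|+|S_1\cap S_2|}-q^{|S_1\cap S_2|}$.

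The main obstacle is the matching upper bound in this range (the clean inequality used above only gives $q^{m-r}$, which is not tight when $m-r>|S_2|$). I expect to proceed as follows: because $f=q^{a_1}+q^{a_2}-q^{a_{12}}$ is increasing in $a_1$ and $a_2$, the optimum lies on the boundary $a_1+a_2=m-r+a_{12}$; by convexity of the exponential along this line, the maximum sits at a vertex of the feasible region, i.e.\ either $a_2=|S_2|$ or $a_1=|S_1|$. Using the identity $|S_1|+|S_2|-|S_1\cap S_2|=m$, the difference of the two candidate values factors as $(q^{|S_2|}-q^{|S_1|})\bigl(1-q^{a_{12}-r-|S_1\cap S_2|}\bigr)\ge 0$, so the choice $a_2=|S_2|$ dominates. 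The resulting expression $q^{|S_2|}+q^{a_{12}}(q^{m-r-|S_2|}-1)$ is non-decreasing in $a_{12}$ (since $q^{m-r-|S_2|}\ge 1$ here), so the maximum occurs at $a_{12}=|S_1\cap S_2|$, matching the value attained above. Subtracting from $|\Delta|$ gives the first row of the table, and specialising to $r=1$ recovers the stated minimum distance $q^{|S_1|}-q^{|S_1\cap S_2|+m-1-|S_2|}$.
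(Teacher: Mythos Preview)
Your argument is correct and in fact cleaner than the paper's in one important respect. The paper parametrises $H^\perp$ by picking a basis with $i_1$ vectors from $\langle S_1\rangle\setminus\langle S_1\cap S_2\rangle$, $i_2$ from $\langle S_2\rangle\setminus\langle S_1\cap S_2\rangle$ and $j$ from $\langle S_1\cap S_2\rangle$, then argues heuristically that the expression $q^j(q^{i_1}+q^{i_2}-1)$ is maximised when $i_2+j$ is pushed up first; the upper bound for \emph{general} $H^\perp$ is never explicitly justified. You instead work with the dimensions $a_1,a_2,a_{12}$ of the intersections, which are defined for every subspace $W$, derive the key constraint $a_1+a_2-a_{12}\le m-r$ from $V_1\cap W+V_2\cap W\subseteq W$, and then optimise rigorously: the product inequality $(q^{a_1}-q^{a_{12}})(q^{a_2}-q^{a_{12}})\ge 0$ handles the range $r\ge m-|S_2|$ in one line, and the convexity/vertex argument together with the factorisation $(q^{|S_2|}-q^{|S_1|})(1-q^{a_{12}-r-|S_1\cap S_2|})\ge 0$ pins down the other range. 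The achieving subspaces you exhibit ($W\subseteq V_2$ in one case, $W\supseteq V_2$ in the other) coincide with the paper's constructions.

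One small slip: in the achievability step for $1\le r\le m-|S_2|$, the quotient map you describe should land in $W/V_2$, not $\bF_q^m/V_2$. As written (target $\bF_q^m/V_2$ of dimension $m-|S_2|$), surjectivity would force $a_1=|S_1|$, contradicting your stated value $a_1=m-r-|S_2|+|S_1\cap S_2|$. With target $W/V_2$ the surjectivity follows from $V_1+V_2=\bF_q^m$ exactly as you intend (given $w+V_2\in W/V_2$, write $w=v_1+v_2$ and note $v_1=w-v_2\in W$), the kernel is $V_1\cap V_2$, and the rank--nullity count gives the correct $a_1=(m-r-|S_2|)+|S_1\cap S_2|$. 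Equivalently, one can simply observe $V_1+W\supseteq V_1+V_2=\bF_q^m$, hence $a_1=|S_1|+(m-r)-m=|S_1|-r$, which is the same number.
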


\begin{proof}
It is known that $\Delta= \langle S_1\rangle \cup \langle S_2\rangle$, where $\langle S_i\rangle$ is a subspace of $\bF_q^m$ with dimension $|S_i|$ for $i=1, 2$.
It is clear that the dimension of $\C_\Delta$ is $m$ and its code length is equal to
$$ n=|\Delta|=q^{|S_1|}+q^{|S_2|}-q^{|S_1\cap S_2|}.$$
By applying the principle of inclusion-exclusion, we have for any $H\in [\bF_q^m, r]$ that
\begin{equation*}\label{twosimcom}
\begin{split}
|\Delta \cap H^{\perp}|&=|\langle S_1\rangle \cap H^{\perp}|+|\langle S_2\rangle\cap H^{\perp}|-|\langle S_1\rangle \cap \langle S_2\rangle\cap H^{\perp}|\\
                       &=|\langle S_1\rangle \cap H^{\perp}|+|\langle S_2\rangle\cap H^{\perp}|-|\langle S_1\cap S_2\rangle \cap H^{\perp}|,
\end{split}
\end{equation*}
where $H^{\perp}$ denotes the dual space of $H$ and $\langle S_1\cap S_2\rangle$ is a simplicial complex with one maximal element generated by the support set $S_1\cap S_2$, i.e., $\langle S_1\cap S_2 \rangle$ is a subspace of $\bF_q^m$ with dimension $|S_1\cap S_2|$. Next, we discuss the maximum value of $|\Delta \cap H^{\perp}|$ for all $r$-dimensional subspaces $H$ of $\bF_q^m$.

Let $|S_1\cap S_2|=t$ ($0\leq t< |S_1|$). We choose $i_1$ linearly independent vectors from $\langle S_1\rangle \setminus \langle S_1\cap S_2 \rangle $ and $i_2$ linearly independent vectors from
$\langle S_2\rangle \setminus \langle S_1\cap S_2 \rangle$ and $j$ linearly independent vectors from $\langle S_1\cap S_2\rangle$, and these $i_1+i_2+j$ vectors constitute a basis of $H^\perp$. Then,
\begin{equation*}\label{eq:twocap}
|\Delta\cap H^\perp| =|\langle S_1\rangle\cap H^{\perp}|+|\langle S_2\rangle\cap H^{\perp}|-|\langle S_1\cap S_2\rangle\cap H^{\perp}|=q^{i_1+j}+q^{i_2+j}-q^{j}=q^j(q^{i_1} + q^{i_2} -1).
\end{equation*}
From this equation, we know that the value of $|\Delta\cap H^\perp|$ is maximized when $i_1$ and $i_2$ take the largest possible values. Since the maximum values of $i_1+j$ and $i_2+j$ are $|S_1|$ and
$|S_2|$, respectively, and $|S_1|\leq |S_2|$, we have that whether $|\Delta \cap H^{\perp}|$ can reach its maximum value depends on whether $i_2+j$ reaches its maximum value.

If $|S_2| \leq m-r$, i.e., $1\leq r\leq m-|S_2|$, then we can choose an $H\in [\bF_q^m, r]$ such that $H^{\perp}=\langle S_2 \rangle + H^{\prime}$, where $H^{\prime} \subseteq \langle S_1\rangle \setminus \langle S_1\cap S_2\rangle $ is a $(m-r-|S_2|)$-dimensional subspace. Such a subspace exists since
$${\rm dim}_{\bF_q}\,H^{\prime}=m-r-|S_2| = |S_1|-t-r<|S_1|-t.$$
Then, this $H$ maximizes $|\Delta\cap H^\perp| $ for all $H\in [\bF_q^m, r]$, i.e.,
\[ \mathop{{\rm max}}\limits_{H\in [\bF_q^m, r]} \left( |\Delta\cap H^\perp| \right) = q^{m-r-|S_2| +|S_1\cap S_2|}+ q^{|S_2|} - q^{|S_1\cap S_2|} .\]

If $0\leq m-r\leq |S_2|$, i.e., $ m-|S_2| \leq  r\leq m$, then we can choose an $H\in [\bF_q^m, r]$ such that $H^\perp \subseteq \langle S_2\rangle$. In this case, $i_2+j$ can reach its maximum value of $m-r$. Then,
\[ \mathop{{\rm max}}\limits_{H\in [\bF_q^m, r]} \left(|\Delta\cap H^\perp| \right) = q^{m-r }.\]

From (\ref{eq:drc}) and these two cases, we obtain the Table~\ref{table:twomaxied}.

\end{proof}

\begin{example}
Let $\C_{\Delta}$ be a binary linear code defined in (\ref{definingset3}), where $\Delta=\langle S_1,S_2 \rangle $ is a simplicial complex of ${\mathbb F_2^5}$ generated
by the the support sets $S_1=\{1,2,3\}$ and $S_2=\{3,4,5\}$. By the help of Magma, we have that $\C_{\Delta}$ is a $[14, 5, 4]$ code and has the weight hierarchy as follows: $d_1=4, d_2=6, d_3=10,d_4=12, d_5=13$. This result is consistent with Table \ref{table:twomaxied} in Theorem \ref{thm2}.
\end{example}


Then, we discuss the weight hierarchy of $\C_{\Delta}$ for the case that ${\Delta}$ is generated by the support sets of three maximal elements.

\begin{theorem}{\label{thm3}}
Let $\Delta=\langle S_1,S_2,S_3\rangle$ be a simplicial complex of ${\mathbb F_q^m}$  with $S_1\cup S_2\cup S_3=[m]$, where $S_1, S_2$ and $S_3$ are the support sets of three maximal elements of $\Delta$, and
$1\leq |S_1|< |S_2|< |S_3|<m$. Then, the linear code $\C_{{\Delta}}$ defined in (\ref{definingset3}) has parameters $[q^{|S_1|}+q^{|S_2|}+q^{|S_3|}-q^{|S_1\cap S_2|}-q^{|S_1\cap S_3|}-q^{|S_2\cap S_3|}+q^{|\cap_{i=1}^3 S_i|}, m, q^{|S_1|}-q^{m-1-|S_2|-|S_3|+|S_1\cap S_2|+|S_1\cap S_3|+|S_2\cap S_3|-|\cap_{i=1}^3 S_i|}]$. Moreover,
\begin{itemize}
	\item [{\rm (1)}] if $|S_1\cap S_3|\leq |S_2\cap S_3|$, then $\C_{{\Delta}}$  has the weight hierarchy given in Table \ref{table:3maxied};
	\item [{\rm (2)}] if $|S_1\cap S_3|\geq |S_2\cap S_3|$, then $\C_{{\Delta}}$  has the weight hierarchy given in Table \ref{APP2}.
\end{itemize}
\begin{table}[h]
     \centering
    \footnotesize
    \renewcommand{\arraystretch}{1.2}
    \setlength{\tabcolsep}{3pt}
	\caption{The weight hierarchy of $\C_{\Delta}$ in Theorem~\ref{thm3}}
    \label{table:3maxied}
	\scalebox{1}{
	\begin{tabular}{ll}
	\toprule
	\textbf{The weight hierarchy} & \textbf{Rang of $r$}\\
	\midrule
    \vspace{0.2cm}
	$q^{|S_1|}-q^{m-r-|S_2|-|S_3|+|S_1\cap S_2|+|S_1\cap S_3|+|S_2\cap S_3|-|\cap_{i=1}^3 S_i|}$ & $1\leq r\leq m-|S_2|-|S_3|+|S_2\cap S_3|$ \\
   \vspace{0.2cm}
	$q^{|S_1|}+q^{|S_2|}-q^{m-r-|S_3|+|S_2\cap S_3|}-q^{|S_1\cap S_2|+|S_1\cap S_3|-|\cap_{i=1}^3 S_i|}$ &  \makecell[l]{$m-|S_2|-|S_3|+|S_2\cap S_3|\leq r\leq $
	\\$m-|S_3|-|S_1\cap S_2|+|\cap_{i=1}^3 S_i|$}\\
     \vspace{0.2cm}
	\makecell[l]{$q^{|S_1|}+q^{|S_2|}-q^{|S_1\cap S_2|}-q^{m-r-|S_3|+|S_1\cap S_3|}-q^{m-r-|S_3|+|S_2\cap S_3|}$ \\
	$+q^{m-r-|S_3|+|\cap_{i=1}^3 S_i|}$}  & \makecell[l]{ $m-|S_3|-|S_1\cap S_2|+|\cap_{i=1}^3 S_i| $\\
	 $\leq r\leq m-|S_3|$} \\
    \vspace{0.2cm}
	\makecell[l]{$q^{|S_1|}+q^{|S_2|}+q^{|S_3|}-q^{|S_1\cap S_2|}-q^{|S_1\cap S_3|}-q^{|S_2\cap S_3|}+q^{|\cap_{i=1}^3 S_i|}$\\
	 $-q^{m-r}$} & $m-|S_3|\leq r \leq m$\\
	\bottomrule
\end{tabular}
}
\end{table}

 \begin{table}[h]
     \centering
    \footnotesize
    \renewcommand{\arraystretch}{1.2}
    \setlength{\tabcolsep}{3pt}
	\caption{The weight hierarchy of $\C_{\Delta}$ inTheorem \ref{thm3}}
    \label{APP2}
	\scalebox{1}{
	\begin{tabular}{ll}
	\toprule
	\textbf{The weight hierarchy} & \textbf{Rang of $r$}\\
	\midrule
    \vspace{0.2cm}
	$q^{|S_1|}-q^{m-r-|S_2|-|S_3|+|S_1\cap S_2|+|S_1\cap S_3|+|S_2\cap S_3|-|\cap_{i=1}^3 S_i|}$ & $1\leq r\leq m-|S_2|-|S_3|+|S_2\cap S_3|$ \\
	\vspace{0.2cm}
	$q^{|S_1|}+q^{|S_2|}-q^{m-r-|S_3|+|S_2\cap S_3|}-q^{|S_1\cap S_2|+|S_1\cap S_3|-|\cap_{i=1}^3 S_i|}$ &  \makecell[l]{$m-|S_2|-|S_3|+|S_2\cap S_3|\leq r\leq $
	\\$m-|S_1|-|S_3|+|S_2\cap S_3|$}\\
	\vspace{0.2cm}
	\makecell[l]{$q^{|S_2|}-q^{m-r-|S_1|-|S_3|+|S_1\cap S_2|+|S_1\cap S_3|+|S_2\cap S_3|-|\cap_{i=1}^3 S_i|}$ } & \makecell[l]{ $m-|S_1|-|S_3|+|S_2\cap S_3| $\\
	 $\leq r\leq  m-|S_1|-|S_3|+|S_1\cap S_3|$} \\
    \vspace{0.2cm}
	$q^{|S_1|}+q^{|S_2|}-q^{|S_1\cap S_2|+|S_2\cap S_3|-|\cap_{i=1}^3 S_i|}-q^{m-r-|S_3|+|S_1\cap S_3|}$ & \makecell[l]{$m-|S_1|-|S_3|+|S_1\cap S_3|\leq r\leq $ \\
	$ m-|S_3|-|S_1\cap S_2|+|\cap_{i=1}^3 S_i|$ }   \\
	\vspace{0.2cm}
	\makecell[l]{$q^{|S_1|}+q^{|S_2|}-q^{|S_1\cap S_2|}-q^{m-r-|S_3|+|S_1\cap S_3|}-q^{m-r-|S_3|+|S_2\cap S_3|}$ \\
	$+q^{m-r-|S_3|+|\cap_{i=1}^3 S_i|}$ } & \makecell[l]{$m-|S_3|-|S_1\cap S_2|+|\cap_{i=1}^3 S_i| $ \\
	$\leq r\leq m-|S_3|$ } \\
	\vspace{0.2cm}
	\makecell[l]{$q^{|S_1|}+q^{|S_2|}+q^{|S_3|}-q^{|S_1\cap S_2|}-q^{|S_1\cap S_3|}-q^{|S_2\cap S_3|}+q^{|\cap_{i=1}^3 S_i|}$\\
	 $-q^{m-r}$} & $m-|S_3|\leq r \leq m$\\
	\bottomrule
\end{tabular}
}
\end{table}
\end{theorem}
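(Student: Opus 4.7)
The plan is to extend the method used in the proof of Theorem~\ref{thm2} to three maximal elements, with the added complication that inclusion-exclusion now produces seven terms and the optimization splits into more ranges of $r$. First, since $\Delta = \langle S_1\rangle \cup \langle S_2\rangle \cup \langle S_3\rangle$ and each $\langle S_i\rangle$ is an $|S_i|$-dimensional subspace of $\bF_q^m$, inclusion-exclusion gives for any $H\in [\bF_q^m, r]$
\begin{equation*}
|\Delta\cap H^\perp| \;=\; \sum_{i=1}^3 q^{a_i} \;-\; \sum_{1\le i<j\le 3} q^{a_{ij}} \;+\; q^{a_{123}},
\end{equation*}
where $a_T = \dim_{\bF_q}\!\bigl(\langle \cap_{i\in T} S_i\rangle \cap H^\perp\bigr)$ for $\emptyset\ne T\subseteq\{1,2,3\}$. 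These seven parameters obey $a_T\le|\cap_{i\in T}S_i|$, the monotonicity $T\subseteq T'\Rightarrow a_{T'}\le a_T$, and the global dimension constraint $\dim_{\bF_q}\bigl(\sum_{i=1}^3 \langle S_i\rangle\cap H^\perp\bigr)\le m-r$ inherited from $H^\perp$.

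Next, using the identity $m=|S_1|+|S_2|+|S_3|-|S_1\cap S_2|-|S_1\cap S_3|-|S_2\cap S_3|+|\cap_{i=1}^3 S_i|$ that follows from $S_1\cup S_2\cup S_3=[m]$, I would identify the ``critical thresholds'' of $r$ at which a new subspace can no longer be entirely contained in $H^\perp$. The natural thresholds are $r=m-|S_2|-|S_3|+|S_2\cap S_3|$ (beyond which $H^\perp\supseteq\langle S_2\cup S_3\rangle$ is impossible), $r=m-|S_3|$ (beyond which $H^\perp\supseteq\langle S_3\rangle$ is impossible), and the intermediate value $r=m-|S_3|-|S_1\cap S_2|+|\cap_{i=1}^3 S_i|$; in the regime $|S_1\cap S_3|\ge|S_2\cap S_3|$ the two additional thresholds $m-|S_1|-|S_3|+|S_2\cap S_3|$ and $m-|S_1|-|S_3|+|S_1\cap S_3|$ appear because the order in which the three directions ``saturate'' changes.

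For each range I would then (i) construct an explicit $H^\perp$ realizing the stated value by first forcing $H^\perp$ to contain a maximal subspace of the form $\langle S_i\rangle$, $\langle S_i\cup S_j\rangle$ or $\langle S_i\cap S_j\rangle$ allowed by the dimension budget, and then distributing the remaining $m-r-(\text{already used})$ basis vectors inside $\langle S_1\rangle\setminus\langle S_1\cap (S_2\cup S_3)\rangle$ to enlarge $a_1$; and (ii) verify optimality by showing that no admissible tuple $(a_1,a_2,a_3,a_{12},a_{13},a_{23},a_{123})$ exceeds this value. Subtracting the maximum from $n=|\Delta|$ then yields the formula for $d_r(\C_\Delta)$ in each row.

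The main obstacle will be step (ii), the upper bound. Unlike in Theorem~\ref{thm2}, the three pairwise intersections and the triple intersection make $|\Delta\cap H^\perp|$ non-monotone in any single $a_i$: increasing $a_i$ may force $a_{ij}$ and $a_{123}$ to grow, which cancels part of the gain. I expect the cleanest approach is a local exchange argument, replacing a basis vector of $H^\perp$ lying outside the ``preferred'' subspace by one inside it and tracking how each $a_T$ changes, together with a direct dimension count. The appearance of two sub-cases separated by the sign of $|S_1\cap S_3|-|S_2\cap S_3|$ then arises naturally: this comparison dictates whether, in the intermediate range of $r$, the intersection $\langle S_1\rangle\cap H^\perp$ or $\langle S_2\rangle\cap H^\perp$ is the first to be forced below its maximum, which interchanges the roles of $S_1$ and $S_2$ in two of the rows.
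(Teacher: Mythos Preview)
Your proposal follows essentially the same route as the paper: inclusion--exclusion over the three coordinate subspaces, the same threshold values of $r$, and in each range an explicit $H^\perp$ built by first including the largest admissible subspace among $\langle S_3\rangle$, $\langle S_2\cup S_3\rangle$, $\langle S_1\cap S_2\rangle$ and then spending the remaining dimension. One small correction to your step~(i): the leftover basis vectors do not always go into $\langle S_1\rangle$; in the intermediate ranges the paper places them in $\langle S_2\rangle$ or in $\langle S_1\cap S_2\rangle\setminus\langle\cap_i S_i\rangle$, according to which of the current values $a_1,a_2$ is larger, and this choice is exactly what produces the two sub-cases governed by the sign of $|S_1\cap S_3|-|S_2\cap S_3|$. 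The paper parameterizes $H^\perp$ by counting basis vectors chosen from each stratum of the lattice (integers $g,i_1,i_2,i_3,j_1,j_2,j_3$) rather than by your dimension parameters $a_T$, and it justifies optimality only heuristically by reading off which exponent dominates in the resulting expression; your planned exchange argument for step~(ii) would therefore supply rigor that the paper itself does not provide.
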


\begin{proof}
It is known that $\Delta=\langle S_1 \rangle \cup \langle S_2 \rangle \cup \langle S_3 \rangle $, where $\langle S_i \rangle$ is a subspace of ${\mathbb F_q^m}$ with dimension $|S_i|$ for $i=1,2,3$.
Since $S_1\cup S_2\cup S_3=[m]$, the dimension of $\C_\Delta$ is $m$. By the principle of inclusion-exclusion, the code length of $\C_\Delta$ is equal to
$$n=|\Delta|=q^{|S_1|}+q^{|S_2|}+q^{|S_3|}-q^{|S_1\cap S_2|}-q^{|S_1\cap S_3|}-q^{|S_2\cap S_3|}+q^{|\cap_{i=1}^3 S_i|}.$$
From equation (\ref{eq:drc}), for any $H\in [{\mathbb F_q^m}, r]$, we have
\begin{equation}{\label{threesimcom}}
\begin{split}
|\Delta \cap H^{\perp}| &=\sum_{i=1}^3 |\langle S_i \rangle \cap H^{\perp}|-\sum_{1\leq i<j\leq 3}|\langle S_i \rangle \cap \langle S_j \rangle \cap H_r^{\perp}|+|\cap_{i=1}^3 \langle S_i \rangle \cap H_r^{\perp}| \\
&=\sum_{i=1}^3 |\langle S_i \rangle \cap H^{\perp}|-\sum_{1\leq i<j\leq 3}|\langle S_i \cap S_j \rangle \cap H_r^{\perp}|+|\langle \cap_{i=1}^3 S_i \rangle \cap H_r^{\perp}|,
\end{split}
\end{equation}
where $H^{\perp}$ denotes the dual space of $H$ and $\langle S_i \cap S_j \rangle, \langle \cap_{i=1}^3 S_i \rangle$ are simplicial complexes generated by the support sets $S_i \cap S_j$ and $\cap_{i=1}^3 S_i$, respectively. Next, we discuss the maximal value of $|\Delta \cap H^{\perp}|$ for all $r$-dimensional subspaces $H$ of ${\mathbb F_q^m}$.

We choose $i_1, i_2, i_3$ and $g$ linearly independent vectors from $\langle S_1\cap S_2\rangle \setminus \langle \cap_{i=1}^3 S_i\rangle, \langle S_1\cap S_3\rangle \setminus \langle \cap_{i=1}^3 S_i\rangle, \langle S_2\cap S_3\rangle \setminus \langle \cap_{i=1}^3 S_i\rangle$ and $\langle \cap_{i=1}^3 S_i \rangle$, respectively and choose $j_1,j_2$ and $j_3$ linearly independent vectors from $\langle S_1\rangle \setminus (\langle S_1\cap S_2\rangle +\langle S_1\cap S_3 \rangle), \langle S_2\rangle \setminus (\langle S_1\cap S_2\rangle +\langle S_2\cap S_3 \rangle)$ and  $\langle S_3\rangle \setminus (\langle S_1\cap S_3\rangle +\langle S_2\cap S_3 \rangle)$, respectively, and these $i_1+i_2+i_3+g+j_1+j_2+j_3$ vectors constitute a basis of a subspace $H^{\perp}$. From (\ref{threesimcom}) we have
\begin{equation}\label{eq:3simcom}
	\begin{split}
		|\Delta \cap H^{\perp}|&=q^{i_1+i_2+g+j_1}+q^{i_1+i_3+g+j_2}+q^{i_2+i_3+g+j_3}-q^{i_1+g}-q^{i_2+g}-q^{i_3+g}+q^{g}\\
		&=q^{g}(q^{i_1+i_2+j_1}+q^{i_1+i_3+j_2}+q^{i_2+i_3+j_3}-q^{i_1}-q^{i_2}-q^{i_3}+1).
	\end{split}
\end{equation}
The exponents $i_1+i_2+g+j_1, i_1+i_3+g+j_2, i_2+i_3+g+j_3$ are the dimensions of the intersections between $\langle S_1 \rangle $ and $H^\perp$,  $\langle S_2 \rangle $ and $H^\perp$, $\langle S_3 \rangle$ and $H^\perp$, respectively. From (\ref{eq:3simcom}) we know that $|\Delta \cap H^{\perp}|$ is maximized when the exponents $i_1+i_2+g+j_1$, $i_1+i_3+g+j_2$, $i_2+i_3+g+j_3$ reach their largest possible values. It is known that the maximum values of these exponents are $|S_1|$, $|S_2|$, $|S_3|$, respectively. Since dim$_{\bF_q} (H^\perp) < m$, these three exponents can not reach the maximum values at the same time. Since $|S_1|<|S_2|<|S_3|$, by (\ref{eq:3simcom}) we have that the maximum value of $|\Delta \cap H^{\perp}|$ depends on the exponents $i_1+i_3+g+j_2$ and $i_2+i_3+g+j_3$. In the following we discuss the maximum value of $|\Delta \cap H^{\perp}|$ according to the dimension of $H^\perp$.

If $|S_2|+|S_3|-|S_2\cap S_3| \leq m-r$, i.e., $1\leq r\leq m-|S_2|-|S_3|+|S_2\cap S_3|$. Then we can choose an $H \in [{\mathbb F_q^m}, r]$ such that $H^{\perp}=\langle S_2\rangle +\langle S_3 \rangle + H^{1}$,
where $H^{1} \subseteq \langle S_1 \rangle \setminus (\langle S_1\cap S_2 \rangle + \langle S_1\cap S_3 \rangle )$ has dimension
$${\rm dim}_{\bF_q}\,H^{1}=m-r-|S_2|-|S_3|+|S_2\cap S_3|<|S_1|-|S_1\cap S_2|-|S_1\cap S_3|+|\cap_{i=1}^3 S_i|.$$
 Then, this $H$ maximizes $|\Delta \cap H^{\perp}|$ and
\[
\begin{split}
	\mathop{{\rm max}}\limits_{H\in [\bF_q^m, r]} \left(  |\Delta\cap H^\perp| \right) &= q^{|S_2|}+q^{|S_3|}+q^{|S_1\cap S_2|+|S_1\cap S_3|-|\cap_{i=1}^3 S_i|+m-r-|S_2|-|S_3|+|S_2\cap S_3|} \\
	&-q^{|S_1\cap S_2|}-q^{|S_1\cap S_3|}-q^{|S_2\cap S_3|}+q^{|\cap_{i=1}^3 S_i|}.
\end{split}
\]

If $|S_3|+ |S_1\cap S_2|- |S_1 \cap S_2 \cap S_3| \leq m-r\leq |S_2|+|S_3|-|S_2\cap S_3|$, i.e., $m-|S_2|-|S_3|+|S_2\cap S_3|\leq r\leq m-|S_3|-|S_1\cap S_2|+ |\cap_{i=1}^3 S_i|$. Then, to maximize $|\Delta \cap H^\perp|$, from (\ref{eq:3simcom}) the space $H^\perp$ should firstly contain $\langle S_3\rangle$ and $\langle S_1\cap S_2\setminus
\cap_{i=1}^3 S_i \rangle$, and the rest of $H^\perp$ is chosen from either $\langle S_2 \rangle $ or $\langle S_1 \rangle$ depending on the size of $|S_1\cap S_3|$ and
$|S_2\cap S_3|$.

{\bf (1)} The case $|S_1\cap S_3|\leq |S_2\cap S_3|$. In this case, the known intersection of $H^\perp$ and $\langle S_2 \rangle $ is not less than the known intersection of $H^\perp$ and $\langle S_1 \rangle $, i.e., $i_3+g\geq i_2+g$. So, the rest of $H^\perp$ should be chosen from $\langle S_2 \rangle \setminus (\langle S_1 \cap S_2 \rangle + \langle S_3 \cap S_2 \rangle)$. Hence, we can find an $H\in [\bF_q^m, r]$ such that
\[ H^\perp = \langle S_3 \rangle + \langle S_1\cap S_2 \setminus \cap_{i=1}^3 S_i \rangle + H^2,\]
where $H^2 \subseteq \langle S_2 \rangle \setminus (\langle S_1 \cap S_2 \rangle + \langle S_3 \cap S_2 \rangle)$ and has dimension
\[{\rm dim}_{\bF_q} H^2 = m-r -|S_3|- |S_1\cap S_2| + |\cap_{i=1}^3 S_i| \leq |S_2|-|S_1\cap S_2|-|S_2\cap S_3|+|\cap_{i=1}^3 S_i|.\]
This $H$ maximizes $|\Delta \cap H^{\perp}|$ and
\[ \begin{split}
	\mathop{{\rm max}}\limits_{H\in [\bF_q^m, r]} \left(  |\Delta\cap H^\perp| \right) &=q^{|S_3|}+q^{|S_1\cap S_2|+|S_1\cap S_3|-|\cap_{i=1}^3 S_i|}+q^{m-r-|S_3|+|S_2\cap S_3|} \\
	&-q^{|S_1\cap S_2|}-q^{|S_1\cap S_3|}-q^{|S_2\cap S_3|}+q^{|\cap_{i=1}^3 S_i|}.
\end{split}
\]

If $|S_3|\leq m-r\leq |S_3|+|S_1\cap S_2|-|\cap_{i=1}^3 S_i|$, i.e., $m-|S_3|-|S_1\cap S_2|+|\cap_{i=1}^3 S_i|\leq r\leq m-|S_3|$. In this case, from (\ref{eq:3simcom}) we know that
$H^\perp = \langle S_3\rangle + H^3$, where $H^3 \subseteq \langle S_1\cap S_2 \rangle \setminus \langle S_1\cap S_2 \cap S_3 \rangle$ can maximize $|\Delta\cap H^\perp|$ and
\[
\begin{split}
	\mathop{{\rm max}}\limits_{H\in [\bF_q^m, r]} \left(  |\Delta\cap H^\perp| \right) &=q^{|S_3|}+q^{|S_1\cap S_3|+m-r-|S_3|}+q^{|S_2\cap S_3|+m-r-|S_3|}\\
	&-q^{|\cap_{i=1}^3 S_i|+m-r-|S_3|}-q^{|S_1\cap S_3|}-q^{|S_2\cap S_3|}+q^{|\cap_{i=1}^3 S_i|}.
\end{split}
\]

If $0\leq m-r\leq |S_3|$, i.e., $m-|S_3|\leq r\leq m$. In this case, to maximize $|\Delta\cap H^\perp|$, by (\ref{eq:3simcom}) we know that $H^\perp$ should be chosen from $\langle S_3 \rangle$, then $\langle S_3\rangle \cap H^\perp = H^\perp$. Furthermore, $\langle S_i \rangle \cap H^\perp = \langle S_i \rangle \cap \langle S_3 \rangle \cap H^\perp$ for $i=1, 2$. From
(\ref{threesimcom}) we have
\[
\begin{split}
\mathop{{\rm max}}\limits_{H\in [\bF_q^m, r]} \left(  |\Delta\cap H^\perp| \right) &=q^{m-r}+q^{i_2+g}+q^{i_3+g}-q^{i_2+g}-q^{i_3+g}-q^{g}+q^{g}=q^{m-r}.
\end{split}
\]

From equation (\ref{eq:drc}) and the discussion above, we obtain  Table \ref{table:3maxied}.

{\bf (2)} The case $|S_1\cap S_3|\geq |S_2\cap S_3|$. It is known that ${\rm dim}_{\bF_q}(H^\perp)=m-r$. When $|S_3|+ |S_1\cap S_2|- |S_1 \cap S_2 \cap S_3| \leq m-r \leq |S_2|+|S_3|-|S_2\cap S_3|$, to maximize $|\Delta\cap H^\perp|$,  the desired $H^\perp$ should first contain $\langle S_3 \rangle$ and $\langle S_1\cap S_2\rangle \setminus \langle S_1\cap S_2 \cap S_3\rangle$. In this case, it is easy to show that $|S_1|+|S_3|-|S_1\cap S_3|\leq |S_2|+|S_3|-|S_2\cap S_3|$. The selection of the remaining part of $H^\perp$ requires further analysis of dimension of $H^\perp$.

If $ |S_1|+ |S_3| - |S_2\cap S_3| \leq m-r \leq |S_2| + |S_3|-|S_2\cap S_3|$, i.e., $m-|S_2|-|S_3|+|S_2\cap S_3|\leq r\leq m-|S_1|-|S_3|+|S_2\cap S_3|$, then $m-r-|S_3|+|S_2\cap S_3|\geq |S_1|$. In this case, taking all of the rest of $H^\perp$ from $\langle S_2 \rangle $ makes the intersection of $H^\perp$ and $\langle S_2 \rangle$ greater than $q^{|S_1|}$. That is, we can find an $H \in [{\mathbb F_q^m},r]$ such that
$$H^{\perp}=\langle S_3\rangle +\langle S_1\cap S_2 \setminus S_1\cap S_2\cap S_3\rangle+H^{4},$$
 where $H^{4}\subseteq \langle S_2\rangle \setminus (\langle S_1\cap S_2\rangle +\langle S_2\cap S_3\rangle )$ and has dimension
$${\rm dim}_{\mathbb F_q}\,H^{4}=m-r-|S_3|-|S_1\cap S_2|+|S_1\cap S_2\cap S_3|<|S_2|-|S_1\cap S_2|-|S_2\cap S_3|+|S_1\cap S_2\cap S_3|.$$
This $H$ maximizes $|\Delta \cap H^{\perp}|$ in this case, and 	
\[\begin{split}
	\mathop{{\rm max}}\limits_{H\in [\bF_q^m, r]} \left(  |\Delta\cap H^\perp| \right) &=q^{|S_3|}+q^{|S_1\cap S_2|+|S_1\cap S_3|-|S_1\cap S_2\cap S_3|}+q^{m-r-|S_3|+|S_2\cap S_3|} \\
	&-q^{|S_1\cap S_2|}-q^{|S_1\cap S_3|}-q^{|S_2\cap S_3|}+q^{|S_1\cap S_2\cap S_3|}.
\end{split}
\]

If  $|S_1|+|S_3| -|S_1\cap S_3| \leq m-r \leq |S_1|+ |S_3| - |S_2\cap S_3|$, i.e., $m-|S_1|-|S_3|+|S_2\cap S_3|\leq r\leq m-|S_1|-|S_3|+|S_1\cap S_3|$, then
$m-r-|S_3|+|S_2\cap S_3|< |S_1|$. In this case, the value of the exponent of $i_1+i_3+g+j_2$ is less than $|S_1|$ if the vectors of $H^\perp \setminus \langle S_3\rangle$
are chosen from $\langle S_2\rangle$. To maximize $|\Delta\cap H^\perp|$, $H^{\perp}$ should firstly contain $\langle S_1\rangle +\langle S_3\rangle $ and the rest of $H^{\perp}$ is chosen from $\langle S_2\rangle $. That is,  we can find an $H \in [{\mathbb F_q^m},r]$ such that
$$H^{\perp}=\langle S_1\rangle +\langle S_3\rangle+H^{5},$$
where $H^{5}\subseteq \langle S_2\rangle \setminus (\langle S_1\cap S_2\rangle +\langle S_2\cap S_3\rangle )$ and has dimension
 $${\rm dim}_{\mathbb F_q}\, H^{5}=m-r-|S_1|-|S_3|+|S_1\cap S_3|<|S_2|-|S_1\cap S_2|-|S_2\cap S_3|+|S_1\cap S_2\cap S_3|.$$
 This $H$ maximizes $|\Delta \cap H^{\perp}|$ in this case, and 	
\[\begin{split}
	\mathop{{\rm max}}\limits_{H\in [\bF_q^m, r]} \left(  |\Delta\cap H^\perp| \right) &=q^{|S_3|}+q^{|S_1|}+q^{m-r-|S_1|-|S_3|+|S_1\cap S_2|+|S_2\cap S_3|-|S_1\cap S_2\cap S_3|} \\
	&-q^{|S_1\cap S_2|}-q^{|S_1\cap S_3|}-q^{|S_2\cap S_3|}+q^{|S_1\cap S_2\cap S_3|}.
\end{split}
\]

If $|S_3|+|S_1\cap S_2|-|S_1\cap S_2\cap S_3|\leq m-r\leq |S_1|+|S_3|-|S_1\cap S_3|$, i.e., $m-|S_1|-|S_3|+|S_1\cap S_3|\leq r\leq m-|S_3|-|S_1\cap S_2|+|S_1\cap S_2\cap S_3|$. To maximize $\Delta\cap H^\perp$, $H^{\perp}$ should firstly contain $\langle S_3 \rangle$ and $\langle S_1\cap S_2 \rangle \setminus \langle S_1\cap S_2\cap S_3\rangle$.
The rest of $H^\perp$ should be chosen from $\langle S_1 \rangle$ since $|S_1\cap S_3| \geq |S_2 \cap S_3|$. That is, we can find an $H$ such that
$$H^{\perp}=\langle S_3\rangle +  \langle S_1\cap S_2 \setminus  S_1\cap S_2\cap S_3 \rangle + H^6, $$
where $H^6 \subseteq \langle S_1\rangle \setminus (\langle S_1\cap S_2\rangle +\langle S_1\cap S_3\rangle )$ and has dimension
$${\rm dim}_{\mathbb F_q}\, H^6 =m-r-|S_3|-|S_1\cap S_2|+|S_1\cap S_2\cap S_3|<|S_2|-|S_1\cap S_2|-|S_2\cap S_3|+|S_1\cap S_2\cap S_3|.$$
This $H$ maximizes $|\Delta \cap H^{\perp}|$ in this case, and
\[\begin{split}
	\mathop{{\rm max}}\limits_{H\in [\bF_q^m, r]} \left(  |\Delta\cap H^\perp| \right) &=q^{|S_3|}+q^{|S_1\cap S_2|+|S_2\cap S_3|-|S_1\cap S_2\cap S_3|}+q^{m-r-|S_3|+|S_1\cap S_3|} \\
	&-q^{|S_1\cap S_2|}-q^{|S_1\cap S_3|}-q^{|S_2\cap S_3|}+q^{|S_1\cap S_2\cap S_3|}.
\end{split}
\]

If $ |S_3|\leq m-r\leq |S_3|+|S_1\cap S_2|-|S_1\cap S_2\cap S_3|$, i.e., $m-|S_3|-|S_1\cap S_2|+|S_1\cap S_2\cap S_3|\leq r\leq m-|S_3|$. In this case, we can find an $H$ such that
 $$H^{\perp}=\langle S_3\rangle +H^7,$$
where $H^7 \subseteq \langle S_1\cap S_2 \rangle \setminus \langle S_1\cap S_2 \cap S_3 \rangle$ and has dimension
 $${\rm dim}_{\mathbb F_q} H^7 = m-r-|S_3|<|S_1\cap S_2|-|S_1\cap S_2\cap S_3|.$$
This $H$ maximizes $|\Delta \cap H^{\perp}|$ in this case, and
\[
\begin{split}
	\mathop{{\rm max}}\limits_{H\in [\bF_q^m, r]} \left(  |\Delta\cap H^\perp| \right) &=q^{|S_3|}+q^{|S_1\cap S_3|+m-r-|S_3|}+q^{|S_2\cap S_3|+m-r-|S_3|}\\
	&-q^{|S_1\cap S_2\cap S_3|+m-r-|S_3|}-q^{|S_1\cap S_3|}-q^{|S_2\cap S_3|}+q^{|S_1\cap S_2\cap S_3|}.
\end{split}
\]

If $0\leq m-r\leq |S_3|$, i.e., $m-|S_3|\leq r\leq m$. In this case, to maximize $|\Delta\cap H^\perp|$, by (\ref{eq:3simcom}) the $H^\perp$ should be chosen from $\langle S_3 \rangle$, then $\langle S_3\rangle \cap H^\perp = H^\perp$. Furthermore, $\langle S_i \rangle \cap H^\perp = \langle S_i \rangle \cap \langle S_3 \rangle \cap H^\perp$ for $i=1, 2$. From
(\ref{threesimcom}) we have
\[
\begin{split}
\mathop{{\rm max}}\limits_{H\in [\bF_q^m, r]} \left(  |\Delta\cap H^\perp| \right) &=q^{m-r}+q^{i_2+g}+q^{i_3+g}-q^{i_2+g}-q^{i_3+g}-q^{g}+q^{g}=q^{m-r}.
\end{split}
\]

From equation (\ref{eq:drc}) and the result above, we obtain Table \ref{APP2}.

\end{proof}

\begin{example}
Let $\C_{\Delta}$ be a binary linear code  defined in (\ref{definingset3}), where $\Delta=\langle S_1,S_2,S_3 \rangle $ is a simplicial complex of ${\mathbb F_2^6}$ generated
by the support sets $S_1=\{1,2\}$, $S_2=\{1,3,4\}$ and $S_3=\{2,3,5,6\}$. By the help of Magma, we have that $\C_{\Delta}$ is a $[23,6,4]$ code and has the weight hierarchy  as follows: $d_1=4,\, d_2=7,\, d_3=15,\, d_4=19,\,  d_5=21,\, d_6=22$. This result is consistent with Table \ref{table:3maxied} in Theorem \ref{thm3}.
\end{example}

\begin{example}
Let $\C_{\Delta}$ be a linear code defined in (\ref{definingset3}), where $\Delta=\langle S_1,S_2,S_3 \rangle $ is a simplicial complex of ${\mathbb F_3^5}$ generated by the support sets $S_1=\{1,2\}$, $S_2=\{1,3,4\}$ and $S_3=\{2,3,4,5\}$. By the help of Magma, we know that $\C_{\Delta}$ is a $[103,5,22]$ code and has the weight hierarchy  as follows: $d_1=22,\, d_2=76,\, d_3=94,\, d_4=100,\, d_5=102$. This result is consistent with Table \ref{APP2} in Theorem \ref{thm3}.
\end{example}

\begin{remark}
Theorem~\ref{thm3} gives the weight hierarchy of a linear code $\C_{\Delta}$, where $\Delta=\langle S_1, S_2, S_3 \rangle $ is generalized by three support sets of some maximal elements of a simplicial complex of ${\mathbb F_q^m}$ under the condition that $|S_1|<|S_2|<|S_3|$. If any of them are equal, the results will be different, and these results are presented in Appendix.
\end{remark}

Theorems 1, 2 and 3 have determined the weight hierarchy of a linear code $\C_{\Delta}$ for a simplicial complex $\Delta$ of $\bF_q^m$ generalized by the support sets of one, two or three maximal elements of $\Delta$, respectively. It is difficult to determine the weight hierarchy of $\C_{\Delta}$ for a general simplicial complex $\Delta$. However, if the generating support sets of a simple complex $\Delta$ do not intersect with each other, then the weight hierarchy of the linear code $\C_{\Delta}$ can be derived as the following theorem.

\begin{theorem}{\label{thm4}}
Let $\Delta=\langle S_1,S_2,\dots,S_\ell \rangle$ be a simplicial complex of ${\mathbb F_q^m}$ with $\cup_{i=1}^\ell S_i=[m]$, where $1\leq |S_1|\leq |S_2|\leq\dots\leq |S_\ell|<m$ and $S_i\cap S_j=\emptyset$ for any $1\leq i<j\leq \ell $. Then, the linear code $\C_{\Delta}$  defined in (\ref{definingset3}) is a $[ \sum_{i=1}^\ell q^{|S_i|}-\ell+1, m, q^{|S_1|}-q^{m-1-\sum_{j=2}^\ell |S_j|}]$ code and has the weight hierarchy in Table~\ref{table:nmaxied}.
\begin{table}[h]
     \centering
    \footnotesize
    \renewcommand{\arraystretch}{1.2}
    \setlength{\tabcolsep}{3pt}
	\caption{The weight hierarchy of $\C_{\Delta}$ in Theorem~\ref{thm4}}
    \label{table:nmaxied}
	\scalebox{1}{
	\begin{tabular}{ll}
	\toprule
	\textbf{The weight hierarchy}\, &\, \textbf{Rang of $r$}\\
	\midrule
	$q^{|S_1|}-q^{m-r-|\cup_{i=2}^\ell S_{i}|}$ \,&\, $1\leq r\leq m-|\cup_{i=2}^\ell S_{i}|$ \\
	$q^{|S_1|}+q^{|S_2|}-q^{m-r-|\cup_{i=3}^\ell S_{i}|}-1$\,&\, $m-|\cup_{i=2}^\ell S_{i}|<r\leq m-|\cup_{i=3}^\ell S_{i}|$\\
	$\cdots$ \, &\, $\cdots$\\
	$\sum_{i=1}^j q^{|S_i|}-q^{m-r-|\cup_{i=j+1}^{\ell}S_i|}-j+1$\, &\, $m-|\cup_{i=j}^{\ell}S_i|<r\leq m-|\cup_{i=j+1}^{\ell}S_i|$\\
	$\cdots$ \,&\, $\cdots$\\
	$\sum_{i=1}^\ell q^{|S_i|}-q^{m-r}-\ell+1$ \,&\, $m-|S_\ell|<r\leq m$\\
	\bottomrule
\end{tabular}
}
\end{table}
\end{theorem}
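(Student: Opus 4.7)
The plan is to apply Proposition~\ref{GHW}: since $\cup_i S_i = [m]$ one has $K = \{\mathbf{0}\}$, so $d_r(\C_\Delta) = n - \max_{H \in [\bF_q^m, r]} |\Delta \cap H^\perp|$. The first observation I would exploit is that pairwise disjointness of the $S_i$ makes the subspaces $\langle S_i \rangle$ supported on disjoint coordinate blocks, so $\bF_q^m = \bigoplus_{i=1}^\ell \langle S_i \rangle$ as an internal direct sum, and every intersection of two or more distinct $\langle S_{i_j} \rangle$'s collapses to $\{\mathbf{0}\}$. Inclusion--exclusion then immediately gives $n = |\Delta| = \sum_i q^{|S_i|} - \ell + 1$, and for any $H \in [\bF_q^m, r]$,
$$|\Delta \cap H^\perp| = \sum_{i=1}^\ell q^{d_i} - \ell + 1, \qquad d_i := \dim_{\bF_q}\bigl(\langle S_i \rangle \cap H^\perp\bigr).$$

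Next I would characterise which integer tuples $(d_1,\dots,d_\ell)$ are realised as $H$ varies. The direct-sum structure forces the subspaces $\langle S_i\rangle \cap H^\perp$ themselves to remain in direct sum, giving $\sum_i d_i \leq \dim H^\perp = m - r$, alongside the obvious $0 \leq d_i \leq |S_i|$. Conversely, any such tuple is realised: pick $U_i \leq \langle S_i \rangle$ with $\dim U_i = d_i$, form $\bigoplus_i U_i$, and extend it inside $\bF_q^m$ to an $(m-r)$-dimensional $H^\perp$; the direct-sum decomposition then forces $\langle S_i \rangle \cap H^\perp = U_i$ exactly. Hence the problem reduces to the combinatorial optimisation
$$\max\Bigl\{\,\sum_{i=1}^\ell q^{d_i} \,:\, 0 \leq d_i \leq |S_i|,\ \sum_i d_i = m - r\,\Bigr\}.$$

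The core of the proof is solving this optimisation. Because $q^x$ is strictly convex and $|S_1|\leq \cdots \leq |S_\ell|$, I expect the maximum to be attained by the greedy allocation that fills the largest-capacity coordinates first: for the index $j$ determined by the row of Table~\ref{table:nmaxied}, set $d_i = |S_i|$ for $i > j$, $d_j = m - r - |\cup_{i > j} S_i|$, and $d_i = 0$ for $i < j$. Plugging this tuple into $|\Delta \cap H^\perp| = \sum q^{d_i} - \ell + 1$ and subtracting from $n$ produces exactly the entries of Table~\ref{table:nmaxied}; specialising to $r = 1$ recovers the stated minimum distance, while the range $m - |S_\ell| < r \leq m$ (where $j = \ell$) yields the final row.

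The main obstacle will be rigorously justifying the greedy optimum. A single unit transfer from a ``bottom'' coordinate $i'$ to a ``top'' coordinate $i''$ changes the objective by $(q-1)(q^{d_{i''}} - q^{d_{i'}-1})$, which need not be non-negative when the non-greedy configuration happens to place a large $d_{i'}$ at a small-cap slot, so a naive one-step swap argument is insufficient. The cleanest remedy is to invoke that a convex function on the feasibility polytope attains its maximum at a vertex---where all but at most one $d_i$ lies in $\{0, |S_i|\}$---and then compare the finitely many vertex values, using the ordering $|S_1| \leq \cdots \leq |S_\ell|$ to show the greedy vertex dominates every competitor.
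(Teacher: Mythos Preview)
Your proposal is correct and follows the same route as the paper: reduce via Proposition~\ref{GHW} and inclusion--exclusion to $|\Delta\cap H^\perp|=\sum_i q^{d_i}-\ell+1$ with $d_i=\dim(\langle S_i\rangle\cap H^\perp)$, then argue that the greedy allocation (fill the largest $\langle S_i\rangle$ first) is optimal. You are in fact more careful than the paper here: the paper simply constructs the greedy $H^\perp$ in each range of $r$ and asserts ``This $H$ maximizes $|\Delta\cap H^\perp|$'' without proof, and it tacitly assumes every $H^\perp$ has a basis drawn from the individual $\langle S_j\rangle$ (false in general, e.g.\ $H^\perp$ spanned by $e_1+e_2$). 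Your convexity/vertex plan, together with the observation $\sum_i d_i\le m-r$, actually addresses the optimality step you rightly flag as the crux.

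One small correction: your realizability claim for tuples with $\sum_i d_i< m-r$ is not right as written. Extending $\bigoplus_i U_i$ to an $(m-r)$-dimensional $H^\perp$ \emph{arbitrarily} can enlarge some $\langle S_i\rangle\cap H^\perp$ strictly beyond $U_i$ (just add a vector from $\langle S_1\rangle\setminus U_1$), so the direct-sum decomposition of $\bF_q^m$ does not by itself force $\langle S_i\rangle\cap H^\perp=U_i$. This does no damage to the argument, because the maximum of $\sum_i q^{d_i}$ necessarily lies on the face $\sum_i d_i=m-r$ (the objective is strictly increasing in each coordinate), and on that face $H^\perp=\bigoplus_i U_i$ already has the right dimension and the intersections are exactly the $U_i$. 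Just make the reduction to the equality face explicit before invoking realizability, and the proof goes through.
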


\begin{proof}
It is known that $\Delta=\langle S_1\rangle \cup \langle S_2\rangle  \cup \cdots\cup \langle S_\ell \rangle $, where $\langle S_i \rangle $ is a subspace of ${\mathbb F_q^m}$ with dimension $|S_i|$ for $1\leq i\leq \ell$. Since $\cup_{i=1}^\ell S_i=[m]$, the dimension of $\C_{\Delta}$ is $m$. It is known that $S_i\cap S_j=\emptyset$ for any $1\leq i<j\leq \ell $. By the principle of inclusion-exclusion we have the length of the code $\C_{\Delta}$
$$n=|\Delta|=\sum_{i=1}^\ell q^{|S_i|}-\sum_{1\leq i<j\leq \ell }q^{|S_i\cap S_j|}+\cdots +(-1)^{\ell-1}q^{|S_1\cap S_2\cap \cdots \cap S_\ell|}=\sum_{i=1}^\ell q^{|S_i|}-\ell +1, $$
and for any $H\in [{\mathbb F_q^m}, r]$,
\begin{equation}{\label{tsim}}
\begin{split}
|\Delta \cap H^{\perp}| &=\sum_{i=1}^\ell |\langle S_i \rangle \cap H^{\perp}|-\sum_{1\leq i<j\leq \ell}|\langle S_i \rangle \cap \langle S_j \rangle \cap H^{\perp}|+\cdots +(-1)^{\ell-1}|\cap_{i=1}^\ell \langle S_i \rangle \cap H^{\perp}| \\
		&=\sum_{i=1}^\ell |\langle S_i \rangle \cap H^{\perp}|-\ell +1.
	\end{split}
\end{equation}
We choose $\delta_j$ linearly independent vectors from $\langle S_j \rangle$ for $1\leq j\leq \ell$, and these $\sum_{j=1}^\ell \delta_j$ vectors constitute a basis of  $H^{\perp}$.
From (\ref{tsim}) we have
$$|\Delta \cap H^{\perp}|= \sum_{i=1}^\ell q^{\delta_i}-\ell +1.$$
Hence, $|\Delta \cap H^{\perp}|$ reaches the maximum value when all exponents $\delta_j's$ are the maximum possible values.
It is known that the maximum possible value of $\delta_j$ is $|S_j|$, but these $t$ exponents can not reach its maximum value at the same time since ${\rm dim}_{\bF_q}(H^\perp) <m$. Then, we have the following cases.
		
If $|\cup_{i=2}^\ell {S_{i}}|\leq m-r\leq m-1$, i.e., $1\leq r\leq m-|\cup_{i=2}^\ell {S_{i}}|$, then we can choose an $H\in [{\mathbb F_q^m}, r]$ such that
$H^{\perp}= \sum_{i=2}^\ell \langle S_i \rangle +H^{\prime}$, where $H^{\prime} \subseteq \langle S_1 \rangle$, and
$${\rm dim}_{\bF_q} \,H^{\prime}=m-r-\sum_{i=2}^\ell |S_i|< |S_1|.$$
This $H$ maximizes $|\Delta \cap H^{\perp}|$ in this case, and
\[\begin{split}
	\mathop{{\rm max}}\limits_{H\in [\bF_q^m, r]} \left(  |\Delta\cap H^\perp| \right) &=q^{m-r-|\cup_{i=2}^\ell S_{i}|}+ \sum_{i=2}^\ell q^{|S_i|}-\ell +1.
\end{split}
\]

If $|\cup_{i=3}^\ell {S_{i}}|\leq m-r<|\cup_{i=2}^\ell {S_{i}}|$, i.e., $m-|\cup_{i=2}^\ell {S_i}|<r\leq m-|\cup_{i=3}^\ell {S_{i}}|$. We choose an $H\in [{\mathbb F_q^m}, r]$ such that $H^{\perp}= \sum_{i=3}^\ell \langle S_i \rangle + H^{*}$,
where $H^{*}\subseteq \langle S_2\rangle $, and 
$${\rm dim}_{\bF_q} \,H^{*}=m-r-\sum_{i=3}^\ell |S_i|< |S_2|.$$
This $H$ maximizes $|\Delta \cap H^{\perp}|$ in this case, and
\[\begin{split}
	\mathop{{\rm max}}\limits_{H\in [\bF_q^m, r]} \left(  |\Delta\cap H^\perp| \right) &=1+q^{m-r-|\cup_{i=3}^\ell {S_{i}}|}+ \sum_{i=3}^\ell q^{|S_i|}-\ell +1.
\end{split}
\]
Repeating the process above we obtain the maximum value of $ |\Delta\cap H^\perp|$ for the different dimension of $H$. From the equation (\ref{eq:drc}) and the known maximum value of $ |\Delta\cap H^\perp|$,
we obtain Table~\ref{table:nmaxied}.

%
	
\end{proof}


\begin{example}
Let $\C_{\Delta}$ be a linear code defined in (\ref{definingset3}), where $\Delta=\langle S_1,S_2,S_3,S_4 \rangle $ is a simplicial complex of ${\mathbb F_3^6}$ generated by the support sets $S_1=\{1\}$, $S_2=\{2\}$, $S_3=\{3,4\}$ and $S_4=\{5,6\}$. By the help of Magma, we know that $\C_{\Delta}$ is a $[21,6,2]$ code and has the weight hierarchy as follows: $d_1=2, d_2=4, d_3=10,d_4=12, d_5=18, d_6=20$. This result is consistent
with Table \ref{table:nmaxied} in Theorem~\ref{thm4}.
\end{example}

\section{The weight hierarchy of a linear code $\C_{\Delta^c}$}

In Section 3, we have determined the weight hierarchy of a linear code $\C_{\Delta}$ over $\bF_q$, where $\Delta$ is a simplicial complex of $\bF_q^m$. The simplicial complex $\Delta$ is generated by the support sets of some maximal elements of $\Delta$. In the following we discuss the weight hierarchy of a linear code over $\bF_q$ whose defining set is the complement of a simplicial complex of $\bF_q^m$. First, we give the definition of the linear codes that will be discussed as follows:
\begin{equation}{\label{definingset4}}
\C_{\Delta^c}=\left\{ {\bf c}({\bf x})=\left( {\bf x}\cdot {\bf v} \right)_{{\bf v}\in \Delta^c} : {\bf x} \in {\mathbb F_q^m} \right\}.
\end{equation}
where $\Delta^c={\bF_q^m}\setminus \Delta $ and $\Delta$ is a simplicial complex of $\bF_q^m$. It is difficult to determine the weight hierarchy of $\C_{\Delta^c}$ for a general simplicial complex $\Delta$ of $\bF_q^m$, and we will discuss this topic for some special $\Delta$ in this section.

Let $\Delta=\langle S_1, S_2, \cdots, S_{\ell}\rangle$ be a simplicial complex of ${\mathbb F_q^m}$, where $S_i$ for $1\leq i\leq \ell$ is the supports set of some maximal elements of $\Delta$. Let $\Delta^c = \bF_q^m \setminus \Delta$ be the complement of $\Delta$ in $\bF_q^m$. It has been shown in~\cite{Hu2023} that $\C_{\Delta^c}$ has dimension $m$, and then $K=\{ {\bf x} \in {\mathbb F_q^m}\, :\, ( {\bf u} \cdot {\bf x} )_{{\bf u} \in \Delta} =
 0\}= \{ {\bf 0} \}$. By Proposition~\ref{GHW} and the principle of inclusion-exclusion, the $r$-th generalized Hamming weight of $\C_{\Delta^c}$ is

\begin{equation}\label{eq:drc1}
d_r(\C_{\Delta^c})= |\Delta^c| - \mathop{\rm max}\limits_{H\in [\bF_q^m, r]} \left( q^{m-r}- \left( \sum_{t=1}^\ell (-1)^{t-1} \sum\limits_{1\leq i_1 < \cdots < i_t\leq \ell} | \cap_{j=1}^t \langle S_{i_j}\rangle \cap H^\perp|\right) \right).
\end{equation}

By (\ref{eq:drc1}), to obtain the $r$-th generalized Hamming weight of $\C_{\Delta^c}$ we need to determine the minimum value of $|\Delta \cap H^{\perp}|$ for all $r$-dimensional subspaces $H$ of $\bF_q^m$. This is different from determining the generalized Hamming weights of $\C_{\Delta}$ in Section~3. To this end, we first discuss the largest subspace contained in the remainder of the sum of two subspaces by removing the union of these two subspaces.

\begin{lem}{\label{V1V2}}
Let $U$ and $V$ be two subspaces of $\bF_q^m$ with dimensions $u$ and $v$, respectively. Then the largest subspace contained in $\{ {\bf 0} \} \cup (U+V)\setminus (U\cup V)$ has the dimension ${\rm min}\{ u-d, v-d\}$, where $d={\rm dim}_{\bF_q} (U\cap V)$.
\end{lem}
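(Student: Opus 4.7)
The plan is to prove the lemma in two parts: an upper bound via the dimension formula, and a matching lower bound via an explicit construction using the graph of an injective linear map.

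First I would observe that a subspace $W \subseteq \{\mathbf{0}\} \cup (U+V)\setminus (U \cup V)$ is exactly a subspace $W$ of $U+V$ satisfying $W \cap U = \{\mathbf{0}\}$ and $W \cap V = \{\mathbf{0}\}$, since any nonzero $w \in W$ must avoid both $U$ and $V$. From here the upper bound is immediate: since $W \cap V = \{\mathbf{0}\}$ and $W+V \subseteq U+V$, the dimension formula gives $\dim W + v \leq \dim(U+V) = u+v-d$, hence $\dim W \leq u-d$. Symmetrically $\dim W \leq v-d$, so $\dim W \leq \min\{u-d, v-d\}$.

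For the lower bound, assume without loss of generality $u \leq v$ so the target dimension is $u-d$. I would decompose $U = (U\cap V) \oplus U'$ and $V = (U \cap V) \oplus V'$, so that $\dim U' = u-d$, $\dim V' = v-d$, and $U+V = (U\cap V) \oplus U' \oplus V'$ as a (genuinely) direct sum. Since $u-d \leq v-d$, pick any injective linear map $\varphi : U' \to V'$ and let $W = \{\,x + \varphi(x) : x \in U'\,\}$, the graph of $\varphi$. Clearly $\dim W = u-d$, so it remains to verify $W \cap U = W \cap V = \{\mathbf{0}\}$.

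The verification is straightforward from the direct-sum decomposition: if $x+\varphi(x) \in U$ then the $V'$-component of this element is $\varphi(x)$, which must be zero since $U$ has no $V'$-component (as $U \subseteq (U\cap V) \oplus U'$), so injectivity gives $x=0$; if $x + \varphi(x) \in V$ then $x \in V$, but $x \in U'$ and $U' \cap V = \{\mathbf{0}\}$ by the direct-sum decomposition, so $x=0$. The main (and only) potentially delicate point is establishing the clean decomposition $U+V = (U\cap V) \oplus U' \oplus V'$ and reading off $U \cap V' = U' \cap V = \{\mathbf{0}\}$ from it; once this bookkeeping is in place, the graph construction falls out, and combined with the upper bound this gives the claimed maximal dimension $\min\{u-d,v-d\}$.
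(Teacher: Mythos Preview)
Your proof is correct and follows essentially the same approach as the paper: both obtain the upper bound from the dimension formula applied to $W+V\subseteq U+V$, and both construct the optimal $W$ by choosing complements $U',V'$ of $U\cap V$ and spanning vectors of the form $\alpha_i+\beta_i$ (your graph of an injective $\varphi:U'\to V'$ is exactly this in coordinate-free language). The only cosmetic difference is that the paper picks explicit bases while you phrase the same construction via the graph of a linear map.
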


\begin{proof}
We can find an $U^\prime \subseteq U$ and an $V^\prime \subseteq V$ such that $U = U^\prime + U\cap V$ and $V=V^\prime + U\cap V$, and
$U^\prime \cap V^\prime =\{ {\bf 0} \}$. Then ${\rm dim}_{\bF_q} U^\prime = u-d$ and  ${\rm dim}_{\bF_q} V^\prime = v-d$. Without the loss of generality,
we assume that $u\leq v$. Let $ \alpha_1,\alpha_2,\dots,\alpha_{u-d}$ be a basis of $U^\prime$ and $\beta_1,\beta_2,\dots,\beta_{v-d}$
be a basis of $V^\prime$. It is easy to show that the vectors $\alpha_1+\beta_1, \alpha_2+\beta_2, \cdots,\alpha_{u-d}+\beta_{u-d}$ are linearly independent over $\bF_q$.
Set $W={\rm span}_{\bF_q}\{\alpha_1+\beta_1,\alpha_2+\beta_2, \cdots,\alpha_{u-d}+\beta_{u-d}\}$. This is a subspace of $U+ V$ with dimension $u-d$. One can show that $W\cap U=\{ {\bf  0} \}$ and $W\cap V=\{ {\bf 0} \}$. Then, we have found an $(u-d)-$dimensional subspace $W$ in $U+V$, but not in $U\cup V$.

For any subspace $W^*$ of $U+V$ with dimension greater than $u-d$, we can show that $W^*\cap V \neq \{ {\bf 0 } \}.$ It is clear that $W^* + V \subseteq U +V$. Then
$${\rm dim}_{\bF_q}(W^*) +{\rm dim}_{\bF_q}(V)- {\rm dim}_{\bF_q}(W^* \cap V)= {\rm dim}_{\bF_q} (W^*+ V) \leq {\rm dim}_{\bF_q} ( U + V) = u+v-d.$$
Since ${\rm dim}_{\bF_q}(W^*) \geq u-d+1$, we derive that ${\rm dim}_{\bF_q}(W^*\cap V) >0$. This completes the proof.
\end{proof}

\begin{remark}{\label{rem:V1Vn}}
The result in Lemma~\ref{V1V2} can be generalized to the case of a finite number of subspaces of $\bF_q^m$.
\end{remark}


First, we discuss the weight hierarchy of the linear code $\C_{\Delta^c}$, where $\Delta$ is generated by the support set of one maximal element of $\Delta$.

\begin{theorem}{\label{com1}}
Let $\Delta=\langle S \rangle$ be a simplicial complex of ${\mathbb F_q^m}$, where $S$ is a support set of some maximal element of $\Delta$ and $|S|<m$.
Then, the linear code $\C_{\Delta^c}$ defined in (\ref{definingset4}) is a $[q^m - q^{|S|}, m, q^m-q^{|S|}-q^{m-1}+q^{|S|-1} ]$ code and has the weight hierarchy in Table~\ref{com1}.
\begin{table}[h]
     \centering
    \footnotesize
    \renewcommand{\arraystretch}{1.2}
    \setlength{\tabcolsep}{3pt}
	\caption{The weight hierarchy of $\C_{\Delta}$ in Theorem~\ref{com1}}
    \label{table:com1maxied}
	\scalebox{1}{
	\begin{tabular}{ll}
	\toprule
	\textbf{The weight hierarchy} & \textbf{Rang of $r$}\\
	\midrule
	$q^m-q^{|S|}-q^{m-r}+q^{|S|-r}$ & $1\leq r\leq |S|$ \\

	$q^m-q^{|S|}-q^{m-r}+1$ & $|S|\leq r\leq m$\\
	\bottomrule
\end{tabular}
}
\end{table}
\end{theorem}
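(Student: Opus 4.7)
The plan is to invoke formula (\ref{eq:drc1}) and reduce the problem to computing $\min_H |\langle S\rangle\cap H^\perp|$ over all $r$-dimensional subspaces $H\subseteq \bF_q^m$. Since $\Delta=\langle S\rangle$ is a single subspace of $\bF_q^m$ of dimension $|S|$, the intersection--union and exclusion--inclusion machinery used in Section~3 collapses: the sum in (\ref{eq:drc1}) reduces to the single term $|\langle S\rangle\cap H^\perp|$, and the code length is simply $n=|\Delta^c|=q^m-q^{|S|}$. Thus
\[
d_r(\C_{\Delta^c}) \;=\; (q^m-q^{|S|}) \;-\; q^{m-r} \;+\; \min_{H\in[\bF_q^m,r]} |\langle S\rangle\cap H^\perp|.
\]

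Next, I would determine $\min_H |\langle S\rangle\cap H^\perp|$ purely by a dimension argument. Writing $U:=\langle S\rangle$, which has $\dim_{\bF_q} U=|S|$, and noting $\dim_{\bF_q} H^\perp=m-r$, the standard inequality
\[
\dim_{\bF_q}(U\cap H^\perp)\;\ge\; \dim_{\bF_q} U+\dim_{\bF_q} H^\perp-m\;=\;|S|-r,
\]
together with $\dim_{\bF_q}(U\cap H^\perp)\ge 0$, gives the lower bound
\[
|\langle S\rangle\cap H^\perp| \;\ge\; q^{\max\{0,\,|S|-r\}}.
\]
This already yields the two regimes in Table~\ref{table:com1maxied}: the bound is $q^{|S|-r}$ when $1\le r\le |S|$, and $1$ when $|S|\le r\le m$.

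To establish achievability, I would construct extremal subspaces explicitly. When $1\le r\le |S|$, I pick an $(|S|-r)$-dimensional subspace $W\subseteq U$, extend it by $m-|S|$ vectors from a complement of $U$ in $\bF_q^m$ to form an $(m-r)$-dimensional subspace $H^\perp$; by construction $U\cap H^\perp=W$, so $|U\cap H^\perp|=q^{|S|-r}$. When $|S|\le r\le m$, I choose any $(m-r)$-dimensional subspace $H^\perp$ inside a fixed complement of $U$, so that $U\cap H^\perp=\{\mathbf 0\}$ and $|U\cap H^\perp|=1$. In both cases the corresponding $H=(H^\perp)^\perp$ lies in $[\bF_q^m,r]$ and realizes the minimum. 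Substituting these minima into the displayed formula for $d_r$ yields the two rows of Table~\ref{table:com1maxied}; setting $r=1$ recovers the stated minimum distance $q^m-q^{|S|}-q^{m-1}+q^{|S|-1}$.

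No serious obstacle is anticipated: because $\Delta$ is itself a subspace, all the difficult inclusion--exclusion bookkeeping of Section~3 is absent, and the argument is essentially a one-line dimension count together with an explicit extension-of-basis construction. The only mild subtlety is handling the overlap at $r=|S|$, where both expressions agree (giving $q^m-q^{|S|}-q^{m-|S|}+1$), so the table is well defined at the boundary.
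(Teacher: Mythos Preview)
Your proposal is correct and follows essentially the same approach as the paper: both reduce via (\ref{eq:drc1}) to computing $\min_{H}|\langle S\rangle\cap H^\perp|$ and then split into the two ranges $1\le r\le |S|$ and $|S|\le r\le m$, constructing in each case an explicit $H^\perp$ of the form (complement of $\langle S\rangle$) plus a suitable piece of $\langle S\rangle$. Your version is in fact slightly more complete, since you justify the lower bound $|\langle S\rangle\cap H^\perp|\ge q^{\max\{0,|S|-r\}}$ via the dimension formula, whereas the paper only asserts that the constructed $H$ achieves the minimum.
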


\begin{proof}
It is clear that the code length of $\C_{\Delta^c}$ is $|\Delta^c|=q^m-q^{|S|}$ and its dimension is $m$, which has been shown in~\cite{Hu2023}. Next, we show the $r$-th generalized Hamming weight of $\C_{\Delta}$. From equation (\ref{eq:drc1}), we have
\[ \begin{split}
d_r(\C_{\Delta^c})&=q^m-q^{|S|}-\mathop{\rm max}\limits_{H\in [\bF_q^m, r]} (q^{m-r}- |\langle S\rangle \cap H^{\perp}|)\\
                  &=q^m-q^{|S|}- q^{m-r}+\mathop{\rm min}\limits_{H\in [\bF_q^m, r]} ( |\langle S\rangle \cap H^{\perp}|).
\end{split}
\]
	
If $0\leq m-r\leq m-|S|$, i.e., $|S|\leq r\leq m$, then we can find a subspace $H\in [{\mathbb F_q^m}, r]$ such that $H^{\perp}$ is contained in the dual of $\langle S\rangle $.
In this case, we have $|\langle S\rangle \cap H^{\perp}|=1$, and $\mathop{\rm min}\limits_{H\in [\bF_q^m, r]} ( |\langle S\rangle \cap H^{\perp}|)=1$. So,
$d_r(\C_{\Delta^c}) = q^m-q^{|S|}- q^{m-r} +1$.

If $m-|S|\leq m-r$, i.e., $1\leq r\leq |S|$, then  $H^{\perp}$ and $\langle S\rangle $ must have a nontrivial intersection. In this case, we choose a subspace $H\in [{\mathbb F_q^m}, r]$ such that $H^{\perp}=\langle S\rangle^{\perp}+H^{\prime}$, where $\langle S\rangle^{\perp}$ is the dual of $\langle S\rangle$ and $H^{\prime}\subseteq \langle S \rangle$ is a $(|S|-r)$-dimensional subspace. Then,
$\mathop{\rm min}\limits_{H\in [\bF_q^m, r]} ( |\langle S\rangle \cap H^{\perp}|) =q^{|S|-r}.$
So, $d_r(\C_{\Delta^c}) = q^m-q^{|S|}- q^{m-r} +q^{|S|-r}$.
	
From equation (\ref{eq:drc1}), we obtain Table \ref{table:com1maxied}.
\end{proof}

\begin{example}
Let $\C_{\Delta^c}$ be a binary linear code defined in (\ref{definingset4}), where $\Delta=\langle S \rangle $ is a simplicial complex of ${\mathbb F_2^5}$ generated by the support set $S=\{2,3,4\}$. By the help of Magma, we know that $\C_{\Delta^c}$ is a $[24, 5, 12]$ code and has the weight hierarchy as follows: $d_1=12, d_2=18, d_3=21,d_4=23, d_5=24$. This result is consistent with Table \ref{table:com1maxied} in Theorem \ref{com1}.
\end{example}

Then, we discuss the weight hierarchy of the linear code $\C_{\Delta^c}$, where $\Delta$ is generated by two support sets of some maximal elements of $\Delta$.

\begin{theorem}{\label{com2}}
Let $\Delta=\langle S_1, S_2 \rangle$ be a simplicial complex of ${\mathbb F_q^m}$, where $S_1$ and $S_2$ are the support sets of two maximal element of $\Delta$, and $1\leq |S_1|\leq |S_2|<m$. Then, $\C_{\Delta^c}$ is a $[q^m-(q^{|S_1|}+q^{|S_2|}-q^{|S_1\cap S_2|}), m, q^m-q^{|S_1|}-q^{|S_2|}-q^{m-1}+q^{|S_1|-1}+q^{|S_2|-1} ]$ code and has the weight hierarchy in Table~\ref{com2}.
\begin{table}[h]
     \centering
    \footnotesize
    \renewcommand{\arraystretch}{1.2}
    \setlength{\tabcolsep}{3pt}
	\caption{The weight hierarchy of $\C_{\Delta}$ in Theorem~\ref{com2}}
    \label{table:com2maxied}
	\scalebox{1}{
	\begin{tabular}{ll}
	\toprule
	\textbf{The weight hierarchy} & \textbf{Rang of $r$}\\
	\midrule
	$q^m-q^{|S_1|}-q^{|S_2|}-q^{m-r}+q^{|S_1|-r}+q^{|S_2|-r}$ & $1\leq r<|S_1|-|S_1\cap S_2|$ \\
	$q^m-q^{|S_1|}-q^{|S_2|}+q^{|S_1\cap S_2|}-q^{m-r}+q^{|S_2|-r}$ & $|S_1|-|S_1\cap S_2|\leq r<|S_2|$\\
	$q^m-q^{|S_1|}-q^{|S_2|}+q^{|S_1\cap S_2|}-q^{m-r}+1$ & $|S_2|\leq r\leq m$ \\
	\bottomrule
\end{tabular}
}
\end{table}
\end{theorem}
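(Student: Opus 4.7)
\medskip

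\noindent\textbf{Proof proposal.}  The length $|\Delta^c|=q^m-(q^{|S_1|}+q^{|S_2|}-q^{|S_1\cap S_2|})$ follows from inclusion--exclusion, and the dimension is $m$ by~\cite{Hu2023}. My plan is to apply (\ref{eq:drc1}), which here takes the form
\[
d_r(\C_{\Delta^c})= |\Delta^c|-q^{m-r}+\min_{H\in[\bF_q^m,r]}|\Delta\cap H^\perp|,
\]
so the whole task reduces to \emph{minimizing} $|\Delta\cap H^\perp|$ over $r$-dim subspaces $H$. By inclusion--exclusion,
\[
|\Delta\cap H^\perp|=q^{a}+q^{b}-q^{c},\qquad a=\dim(\langle S_1\rangle\cap H^\perp),\ b=\dim(\langle S_2\rangle\cap H^\perp),\ c=\dim(\langle S_1\cap S_2\rangle\cap H^\perp),
\]
with the dimensional constraints $a\ge\max(0,|S_1|-r)$, $b\ge\max(0,|S_2|-r)$, $c\le|S_1\cap S_2|$, and $c\le\min(a,b)$. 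Writing $A=S_1\setminus S_2$, $B=S_2\setminus S_1$, $C=S_1\cap S_2$, $T=[m]\setminus(S_1\cup S_2)$, I will work in the natural direct-sum decomposition $\bF_q^m=\langle A\rangle\oplus\langle B\rangle\oplus\langle C\rangle\oplus\langle T\rangle$.

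\medskip

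The lower bounds are immediate from these inequalities: in Case~3 ($r\ge|S_2|$) we have $q^a\ge 1$ and $q^b-q^c\ge 0$, giving $|\Delta\cap H^\perp|\ge 1$; in Case~2 ($|S_1|-|C|\le r<|S_2|$) we have $q^a-q^c\ge 0$ and $q^b\ge q^{|S_2|-r}$, giving $|\Delta\cap H^\perp|\ge q^{|S_2|-r}$; in Case~1 ($r<|S_1|-|C|$) the constraints $a\ge|S_1|-r$, $b\ge|S_2|-r$, $c\le|C|$ give $|\Delta\cap H^\perp|\ge q^{|S_1|-r}+q^{|S_2|-r}-q^{|C|}$. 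Substituting into the formula for $d_r(\C_{\Delta^c})$ gives exactly the three entries in Table~\ref{table:com2maxied} (one can check that the Case~1 entry $-q^{|C|}+|\Delta^c|$ telescopes to the displayed expression).

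\medskip

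The harder direction is achievability, and this is where Lemma~\ref{V1V2} does the work. For Case~3 I will take $H^\perp$ to be any $(m-r)$-dim subspace inside a maximal subspace avoiding $\langle S_1\rangle\cup\langle S_2\rangle$; by Lemma~\ref{V1V2} (applied to the restriction to $\langle S_1\cup S_2\rangle$ and then augmented by $\langle T\rangle$) such a subspace has dimension $(|S_1|-|C|)+(m-|S_1\cup S_2|)=m-|S_2|\ge m-r$. For Case~2 I take $H^\perp=W_C+W_B+W_2$ with $W_C\subseteq\langle C\rangle$ of dimension $\max(0,|S_1|-r)$, $W_B\subseteq\langle B\rangle$ chosen to bring the intersection with $\langle S_2\rangle$ up to $|S_2|-r$, and $W_2$ an avoiding subspace (again via Lemma~\ref{V1V2}) filling the rest of the dimension. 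The main obstacle is Case~1: we must force $\langle C\rangle\subseteq H^\perp$ (to realize $c=|C|$) while simultaneously keeping $a=|S_1|-r$ and $b=|S_2|-r$ at their minima. The construction is $H^\perp=\langle C\rangle+\langle T\rangle+Q'$, where $Q'\subseteq\langle A\rangle+\langle B\rangle$ has dimension $|A|+|B|-r$ with $\dim(Q'\cap\langle A\rangle)=|A|-r$ and $\dim(Q'\cap\langle B\rangle)=|B|-r$. I will build $Q'$ explicitly by taking bases $\alpha_1,\dots,\alpha_{|A|}$ of $\langle A\rangle$ and $\beta_1,\dots,\beta_{|B|}$ of $\langle B\rangle$ and setting
\[
Q'=\mathrm{span}\{\alpha_1,\dots,\alpha_{|A|-r},\beta_1,\dots,\beta_{|B|-r},\alpha_{|A|-r+1}+\beta_{|B|-r+1},\dots,\alpha_{|A|}+\beta_{|B|}\},
\]
which is legal because $r<|A|\le|B|$ in Case~1. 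The same twisted-diagonal trick that proves Lemma~\ref{V1V2} shows that $Q'\cap\langle A\rangle$ and $Q'\cap\langle B\rangle$ have the desired dimensions, and a direct check confirms $H^\perp\cap\langle S_1\rangle$, $H^\perp\cap\langle S_2\rangle$, $H^\perp\cap\langle C\rangle$ meet the required dimensions and that $\dim H^\perp=m-r$. Combining the lower bound with these explicit constructions yields Table~\ref{table:com2maxied}; specializing to $r=1$ (and noting the boundary case $|A|=1$ falls into Case~2 and agrees on the overlap) recovers the stated minimum distance.
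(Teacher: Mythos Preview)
Your proposal is correct and follows the same overall strategy as the paper: reduce via~(\ref{eq:drc1}) to minimizing $|\Delta\cap H^\perp|$, split into ranges of $r$, and invoke the avoiding-subspace idea behind Lemma~\ref{V1V2} to build the optimal $H^\perp$. The constructions you give are equivalent to the paper's (your $W+\langle T\rangle$ is exactly the paper's $W+S^\perp=\langle S_2\rangle^\perp$), and your Case~1 twisted-diagonal $Q'$ is a cleaner, coordinate-explicit version of the paper's $H^\perp=\langle S_2\rangle^\perp+H^3+H^4$.

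Two genuine differences are worth noting. First, you separate the argument into \emph{lower bounds} (via the constraints $a\ge\max(0,|S_1|-r)$, $b\ge\max(0,|S_2|-r)$, $c\le\min(a,b,|C|)$) and \emph{achievability}; the paper only does the achievability side and asserts optimality of its constructed $H$ without justification. Your version is therefore more rigorous. Second, you handle the middle range $|S_1|-|C|\le r<|S_2|$ uniformly via the single construction $W_C+W_B+W_2$, whereas the paper splits it into the two sub-cases $|S_1|\le r<|S_2|$ and $|S_1|-|C|\le r<|S_1|$ with separate constructions. Your coordinate decomposition $\bF_q^m=\langle A\rangle\oplus\langle B\rangle\oplus\langle C\rangle\oplus\langle T\rangle$ is what makes both of these simplifications possible, and it buys clarity at no cost.
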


\begin{proof}
It is easy to see that $\Delta= \langle S_1\rangle \cup \langle S_2\rangle$, where $\langle S_i\rangle$ is a subspace of $\bF_q^m$ with dimension $|S_i|$ for $i=1, 2$. The dimension of $\C_{\Delta^c}$ is $m$, which has been shown in \cite{Hu2023}, and the  length of code $\C_{\Delta^c}$ is
$$n=|\Delta^c|=q^m-|\Delta|=q^m-q^{|S_1|}-q^{|S_2|}+q^{|S_1\cap S_2|}.$$
By (\ref{eq:drc1}), the $r$-th generalized Hamming weight of $\C_{\Delta^c}$ is
\begin{equation}{\label{twosimcom0}}
\begin{split}
d_r(\C_{\Delta^c})&=|\Delta^c|-\mathop{\rm max}\limits_{H\in [\bF_q^m, r]} (q^{m-r}- |\Delta \cap H^{\perp}|)\\
&=q^m-q^{|S_1|}-q^{|S_2|}+q^{|S_1\cap S_2|}- q^{m-r}+\mathop{\rm min}\limits_{H\in [\bF_q^m, r]}\left( |\Delta \cap H^{\perp}|\right),
	\end{split}
	\end{equation}
where $H^{\perp}$ denote the dual space of $H$, and
\begin{equation}\label{eq:dhinters}
|\Delta \cap H^{\perp}|=|\langle S_1 \rangle \cap H^{\perp}|+|\langle S_2 \rangle \cap H^{\perp}|-|\langle S_1 \cap S_2 \rangle \cap H^{\perp}|.
\end{equation}
To obtain the $r$-th generalized Hamming weight of $\C_{\Delta^c}$, we need to determine the minimum value of $|\Delta \cap H^{\perp}|$ for all $r$-dimensional subspaces $H$.

It is known that $\Delta=\langle S_1\rangle \cup \langle S_2\rangle$, which is contained in $\langle S_1\rangle +\langle S_2\rangle$. Since $|S_1|\leq |S_2|$, by Lemma~\ref{V1V2} we know that $\{ {\bf 0 } \}\cup (\langle S_1\rangle +\langle S_2\rangle)\setminus (\langle S_1\rangle \cup \langle S_2\rangle)$ contains the largest subspace
$W$, which has the dimension $|S_1|-|S_1\cap S_2|$, and the dual $S^\perp$ of $\langle S_1\rangle +\langle S_2\rangle$ has the dimension $m-|S_1|-|S_2|+|S_1\cap S_2|$.
One can see that $W$ and $S^\perp$ have a trivial intersection and $W+S^\perp$ has the dimension $|S_1|-|S_1\cap S_2|+m-|S_1|-|S_2|+ |S_1\cap S_2|= m-|S_2|$.

If $m-r\leq m-|S_2|$, i.e., $ r \geq |S_2|$, then we can find an $H\in [\bF_q^m, r]$ such that
$H^\perp \subseteq W + S^\perp$. In this case, $\langle S_1 \rangle \cap H^\perp= \{ {\bf 0}\}$, $\langle S_2\rangle \cap H^\perp= \{ {\bf 0 } \}$ and
$\langle S_1\cap S_2\rangle \cap H^\perp= \{ {\bf 0 }\}$. From (\ref{eq:dhinters}) we konw that this $H$ minimizes $|\Delta \cap H^{\perp}|$, and
$\mathop{\rm min}\limits_{H\in [\bF_q^m, r]}\left( |\Delta \cap H^{\perp}|\right)= 1$. From (\ref{twosimcom0}) we have
\[ d_r(\C_{\Delta^c})=q^m-q^{|S_1|}-q^{|S_2|}+q^{|S_1\cap S_2|}- q^{m-r}+1. \]

If $m-|S_2|<m-r\leq m-|S_1|$, i.e., $|S_1| \leq r<|S_2|$, then for any $H\in [\bF_q^m, r]$,  $H^{\perp}\cap \langle S_2 \rangle \neq \{ {\bf 0} \}$ since
${\rm dim}_{\bF_q}\,H^{\perp}+{\rm dim}_{\mathbb F_q}\,\langle S_2\rangle =m-r+|S_2|>m$. It is known that
$${\rm dim}_{\mathbb F_q}\,(W+S^{\perp})=m-|S_2|, \ (W+S^{\perp})\cap \langle S_2\rangle=\{ {\bf 0 }\} \ {\rm and} \ (W+S^{\perp})\cap \langle S_1\rangle=\{{\bf 0} \}. $$
Then, we have
$$\langle S_2\rangle^{\perp}=W+S^{\perp}\ {\rm and}\ \langle S_2\rangle^{\perp}\subseteq \langle S_1\rangle^{\perp}.$$
So, we can find an $H\in [\bF_q^m, r]$ such that $H^\perp \subseteq \langle S_1 \rangle^\perp$ and $H^\perp = \langle S_2 \rangle^\perp + H^1$,
where $H^1 \subseteq \langle S_2 \rangle$ has the dimension $m-r-(m-|S_2|)=|S_2|-r$. This $H$ minimizes the intersections of $H^\perp$ and $\langle S_1\rangle$, and of $H^\perp$ and $\langle S_2\rangle$ at the same time, and
$$ |H^{\perp} \cap \langle S_1\rangle| =1,\,\, |H^{\perp}\cap \langle S_2\rangle|=  q^{|S_2|-r}, \,\, |H^{\perp} \cap \langle S_1 \cap S_2 \rangle |=1.$$
In this case, from (\ref{twosimcom0}) and (\ref{eq:dhinters}) we have
\[ d_r(\C_{\Delta^c})=q^m-q^{|S_1|}-q^{|S_2|}+q^{|S_1\cap S_2|}- q^{m-r}+ q^{|S_2|-r}. \]


If $m-|S_1|< m-r\leq m-|S_1|+|S_1 \cap S_2|$, i.e., $|S_1|-|S_1\cap S_2|\leq r<|S_1|$, then for any $H\in [{\mathbb F_q^m},r]$, $H^{\perp} \cap \langle S_1 \rangle \neq \{ \bf 0 \}$ and $H^{\perp} \cap \langle S_2 \rangle \neq \{ \bf 0 \}$ since
$${\rm dim}_{\bF_q}\,H^{\perp}+{\rm dim}_{\bF_q}\,\langle S_1\rangle =m-r+|S_1|>m\ {\rm and}\ {\rm dim}_{\bF_q}\,H^{\perp}+{\rm dim}_{\bF_q}\,\langle S_2\rangle =m-r+|S_2|>m,$$
respectively. It is known that $\langle S_2\rangle^{\perp}=W+S^{\perp}\ {\rm and}\ \langle S_2\rangle^{\perp}\subseteq \langle S_1\rangle^{\perp}$, then we can find an $H\in [\bF_q^m, r]$ such that
$$H^{\perp}=\langle S_1\rangle^{\perp}+H^{2}=\langle S_2\rangle^{\perp}+H^{3}+H^{2},$$
where $H^{2} \subseteq \langle S_1\rangle$, $\langle S_1\rangle^{\perp}=\langle S_2\rangle^{\perp}+H^{3}$ and $H^{3}\subseteq \langle S_2\rangle \setminus \langle S_1\cap S_2\rangle$, and
$${\rm dim}_{\bF_q}\,H^{2}=m-r-(m-|S_1|)=|S_1|-r,\,\, {\rm dim}_{\bF_q}\,H^{3}=m-|S_1|-(m-|S_2|)=|S_2|-|S_1|.$$
This $H$ minimizes the intersections of $H^{\perp}$ and $\langle S_1\rangle$, and of $H^{\perp}$ and $\langle S_2\rangle$ at the same time. We have
$$|\langle S_1\rangle \cap H^{\perp}|=q^{|S_1|-r},\,\, |\langle S_2\rangle \cap H^{\perp}|=q^{|S_2|-r}.$$
However, to minimize the value of $|\Delta \cap H^{\perp}|$, by (\ref{eq:dhinters}) the chosen $H$ need to further maximize the value of $|\langle S_1 \cap S_2 \rangle \cap H^{\perp}|$.
Then, we make  $H^{2}\subseteq \langle S_1\cap S_2\rangle \subseteq \langle S_1\rangle$. Hence, we have
 $$|\langle S_1\cap S_2\rangle \cap H^{\perp}|=q^{|S_1|-r}.$$
In this case, by (\ref{twosimcom0}) and (\ref{eq:dhinters}) we have
$$d_r(\C_{\Delta^c})=q^m-q^{|S_1|}-q^{|S_2|}+q^{|S_1\cap S_2|}-q^{m-r}+q^{|S_2|-r}.$$

If $m-|S_1|+|S_1\cap S_2|<m-r$, i.e., $1\leq r<|S_1|-|S_1\cap S_2|$, then we still choose an $H\in [{\mathbb F_q^m},r]$ such that
 $$H^{\perp}=\langle S_2\rangle^{\perp}+H^{3}+H^{4},$$
where $\langle S_1\cap S_2\rangle \subseteq H^{4}\subseteq \langle S_1\rangle$ and $H^{3}\subseteq \langle S_2\rangle \setminus \langle S_1\cap S_2\rangle$. It is easy to show that this $H$ minimizes the intersections
as follows:
$$|\langle S_1\rangle \cap H^{\perp}|=q^{|S_1|-r},\,\, |\langle S_2\rangle \cap H^{\perp}|=q^{|S_2|-r}\ {\rm and}\ |\langle S_1\cap S_2\rangle \cap H^{\perp}|=q^{|S_1\cap S_2|}.$$
In this case, from (\ref{twosimcom0}) and (\ref{eq:dhinters}) we have
$$d_r(\C_{\Delta^c})=q^m-q^{|S_1|}-q^{|S_2|}-q^{m-r}+q^{|S_1|-r}+q^{|S_2|-r}.$$

From equation (\ref{eq:drc1}) and the discussion above, we obtain Table \ref{table:com2maxied}.
\end{proof}

\begin{example}
Let $\C_{\Delta^c}$ be a linear code defined in (\ref{definingset4}), where $\Delta=\langle S_1,S_2 \rangle $ is a simplicial complex of ${\mathbb F_2^6}$ generated by  support sets $S_1=\{1,2\}$ and $S_2=\{2,3,4\}$. By the help of Magma, we have that $\C_{\Delta^c}$ is a $[54,6,26]$ code and has the weight hierarchy  as follows: $d_1=26, d_2=40, d_3=47,d_4=51, d_5=53, d_6=54$. This result is consistent with Table \ref{table:com2maxied} in Theorem \ref{com2}.
\end{example}


It is known that the weight hierarchy of $\C_{\Delta^c}$ depends on the computation of the minimum value of $|\Delta \cap H^{\perp}|$ for any $H\in [\bF_q^m, r]$. This seems more complicated than computing the maximum value of $|\Delta \cap H^{\perp}|$ for any $H\in [\bF_q^m, r]$ for a general simplicial complex. However, if the generating support sets of a simplicial complex $\Delta$ do not intersect with each other, we can still determine the weight hierarchy of the linear code $\C_{\Delta^c}$ using the following theorem.

\begin{theorem}{\label{ncom}}
Let $\Delta=\langle S_1, S_2, \cdots, S_\ell \rangle$ be a simplicial complex of ${\mathbb F_q^m}$, where $S_1, S_2, \cdots, S_\ell$ are  the support sets of $\ell$ maximal elements of $\Delta$ with $S_i \cap S_j=\emptyset$ for any $1\leq i<j\leq \ell$,  and $1\leq |S_1|\leq  |S_2|\leq \cdots \leq |S_\ell|<m$. Then $\C_{\Delta}$ is a $[q^m-\sum_{i=1}^\ell q^{|S_i|}+\ell-1,m, (q-1)q^{m-1} -\sum_{i=1}^\ell (q-1) q^{|S_i|-1}]$ code and  has the weight hierarchy in Table \ref{table:comnmaxied}.
\begin{table}[h]
     \centering
    \footnotesize
    \renewcommand{\arraystretch}{1.2}
    \setlength{\tabcolsep}{3pt}
	\caption{The weight hierarchy of $\C_{\Delta}$ in Theorem~\ref{ncom}}
    \label{table:comnmaxied}
	\scalebox{1}{
	\begin{tabular}{ll}
	\toprule
	\textbf{The weight hierarchy} \, & \, \textbf{Rang of $r$}\\
	\midrule
	$q^m-\sum_{i=1}^{\ell}q^{|S_i|}-q^{m-r}+\sum_{i=1}^{\ell}q^{|S_i|-r}$ \, & \, $1\leq r< |S_1|$ \\
	$q^m-\sum_{i=1}^{\ell}q^{|S_i|}-q^{m-r}+\sum_{i=2}^{\ell}q^{|S_i|-r}+1$ \, & \, $|S_1|\leq r< |S_2|$ \\
	$\cdots$ \, & \, $\cdots$\\
	$q^m-\sum_{i=1}^{\ell}q^{|S_i|}-q^{m-r}+\sum_{i=j}^{\ell}q^{|S_i|-r}+j-1$ \, &\, $|S_{j-1}|\leq r<|S_j|$\\
	$\cdots$ \,&\, $\cdots$\\
	$q^m-\sum_{i=1}^{\ell}q^{|S_i|}-q^{m-r}+\ell$ \,&\, $|S_{\ell}|\leq r\leq m$\\
	\bottomrule
\end{tabular}
}
\end{table}
\end{theorem}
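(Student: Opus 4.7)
The plan is to follow the outline of Theorem~\ref{thm4}, but with \emph{minimum} in place of maximum and with an existence lemma (the $\ell$-fold extension of Lemma~\ref{V1V2} noted in Remark~\ref{rem:V1Vn}) to realize that minimum. Because $S_i\cap S_j=\emptyset$ for $i\neq j$ and $\cup_i S_i=[m]$, the subspaces $\langle S_i\rangle$ form a direct-sum decomposition $\bF_q^m=\bigoplus_{i=1}^\ell \langle S_i\rangle$, so every multi-intersection of two or more of them is trivial. Inclusion--exclusion therefore collapses to $|\Delta|=\sum_{i=1}^\ell q^{|S_i|}-(\ell-1)$, giving $|\Delta^c|=q^m-\sum_i q^{|S_i|}+\ell-1$, and for every $H\in[\bF_q^m,r]$,
\[
|\Delta\cap H^\perp|=\sum_{i=1}^\ell q^{\delta_i}-(\ell-1),\qquad \delta_i:=\dim_{\bF_q}\bigl(\langle S_i\rangle\cap H^\perp\bigr).
\]
Combined with (\ref{eq:drc1}) this reduces $d_r(\C_{\Delta^c})$ to $|\Delta^c|-q^{m-r}-(\ell-1)+\min_{H\in[\bF_q^m,r]}\sum_{i=1}^\ell q^{\delta_i}$.

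The dimension inequality $\dim(\langle S_i\rangle\cap H^\perp)\geq |S_i|+(m-r)-m$ immediately gives $\delta_i\geq\max(0,|S_i|-r)$ for each $i$. The key step is to show this lower bound is \emph{simultaneously} attainable: for every $r$ there exists $H^\perp$ of dimension $m-r$ with $\delta_i=\max(0,|S_i|-r)$ for all $i$. To construct such $H^\perp$, pick inside each $\langle S_i\rangle$ with $|S_i|>r$ an arbitrary subspace $W_i$ of dimension $|S_i|-r$ and set $N=\sum_{i:\,|S_i|>r}W_i$. By the direct-sum property, $N$ has dimension $K=\sum_{i:\,|S_i|>r}(|S_i|-r)$ and meets each $\langle S_i\rangle$ in exactly $W_i$ (trivially when $|S_i|\leq r$). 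Passing to the quotient $\bF_q^m/N$, the images $\overline{\langle S_i\rangle}$ remain in direct sum with dimensions $\min(|S_i|,r)$, and the remaining task is to find $\bar H$ of dimension $m-r-K$ in the quotient meeting each $\overline{\langle S_i\rangle}$ trivially. Since this target dimension equals the total quotient dimension minus $\max_i\min(|S_i|,r)$, existence of $\bar H$ follows from the $\ell$-fold generalization of Lemma~\ref{V1V2} recorded in Remark~\ref{rem:V1Vn}; lifting back, $H^\perp=\bar H+N$ has the prescribed intersection profile.

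With achievability in hand, I would split the range of $r$ into intervals $|S_{j-1}|\leq r<|S_j|$ (using the conventions $|S_0|:=1$ and $|S_{\ell+1}|:=m+1$). On such an interval, $\delta_i=0$ for $i<j$ and $\delta_i=|S_i|-r$ for $i\geq j$, so $\min|\Delta\cap H^\perp|=(j-1)+\sum_{i=j}^\ell q^{|S_i|-r}-(\ell-1)$. Substituting into the expression for $d_r$ and simplifying recovers the $j$-th row of Table~\ref{table:comnmaxied}. Specializing $r=1$ gives the stated minimum distance $(q-1)q^{m-1}-\sum_i(q-1)q^{|S_i|-1}$, while $k=m$ is inherited from~\cite{Hu2023}.

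The principal obstacle is the existence step: proving that within a direct sum $\bigoplus_i U_i$ one can find a subspace of dimension equal to the total minus the largest summand that meets each $U_i$ trivially. This is precisely the $\ell$-fold version of Lemma~\ref{V1V2}; it can be proved by induction on $\ell$, at each step invoking the two-factor Lemma~\ref{V1V2} to peel off the largest summand and applying the inductive hypothesis to the remainder, using the same anti-diagonal construction $\{\alpha_k+\beta_k\}$ as in the proof of Lemma~\ref{V1V2}. All remaining steps are routine inclusion--exclusion bookkeeping.
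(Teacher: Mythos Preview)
Your approach is essentially the paper's: both reduce $d_r(\C_{\Delta^c})$ via inclusion--exclusion to minimizing $\sum_i q^{\delta_i}$ with $\delta_i=\dim(\langle S_i\rangle\cap H^\perp)$, bound each $\delta_i\ge\max(0,|S_i|-r)$ by the dimension formula, and then realize the bound simultaneously using the $\ell$-fold version of Lemma~\ref{V1V2} (Remark~\ref{rem:V1Vn}); the paper builds the optimal $H^\perp$ case-by-case starting from the fixed subspace $W+S^\perp$, while you pass to a quotient, but the key lemma and the outcome are identical.

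One correction is needed. The hypothesis of Theorem~\ref{ncom} does \emph{not} include $\cup_i S_i=[m]$ (that was a convenience assumption for $\C_\Delta$ in Section~3, not for $\C_{\Delta^c}$ in Section~4), so your direct-sum decomposition should read $\bF_q^m=\bigl(\bigoplus_i\langle S_i\rangle\bigr)\oplus C$ with $C=\langle[m]\setminus\cup_iS_i\rangle$; this is exactly the role of $S^\perp$ in the paper's proof, and your quotient argument goes through unchanged once you allow $\bar H$ to absorb $\bar C$ freely. This also repairs the sentence ``target dimension equals the total quotient dimension minus $\max_i\min(|S_i|,r)$'': in general the target is only \emph{at most} that quantity (strictly less when $r>|S_\ell|$ or when $C\neq\{\mathbf 0\}$), but inequality in that direction is precisely what you need for the existence of $\bar H$.
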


\begin{proof}
It is easy to see that $\Delta= \cup_{i=1}^\ell \langle S_i\rangle$, where $\langle S_i\rangle$ is a subspace of $\bF_q^m$ with dimension $|S_i|$ for $1\leq i\leq \ell$. The dimension of $\C_{\Delta^c}$ is $m$, which has been shown in \cite{Hu2023}, and the length of code $\C_{\Delta^c}$ is
	$$n=|\Delta^c|=q^m-|\Delta|=q^m-(\sum_{i=1}^\ell q^{|S_i|}-\sum_{1\leq i<j\leq \ell }q^{|S_i\cap S_j|}+\cdots +(-1)^{\ell-1}q^{|S_1\cap S_2\cap \cdots \cap S_\ell |})=q^m-\sum_{i=1}^\ell q^{|S_i|}+\ell-1.$$
By (\ref{eq:drc1}), for any $H\in [{\mathbb F_q^m}, r]$, we have

\begin{equation}{\label{tsimcom}}
\begin{split}
d_r(\C_{\Delta^c})&=|\Delta^c|-\mathop{\rm max}\limits_{H\in [\bF_q^m, r]} (q^{m-r}- |\Delta \cap H^{\perp}|)\\
&=q^m-\sum_{i=1}^\ell q^{|S_i|}+\ell-1- q^{m-r}+\mathop{\rm min}\limits_{H\in [\bF_q^m, r]}\left( |\Delta \cap H^{\perp}|\right),
	\end{split}
	\end{equation}
where $H^{\perp}$ denote the dual space of $H$ and
\begin{equation*}
	\begin{split}
		|\Delta \cap H^{\perp}| &=\sum_{i=1}^\ell|\langle S_i \rangle \cap H^{\perp}|-\sum_{1\leq i<j\leq \ell}|\langle S_i \rangle \cap \langle S_j \rangle \cap H^{\perp}|+\cdots +(-1)^{\ell-1}|\langle S_1 \rangle \cap \cdots \cap \langle S_\ell \rangle \cap H^{\perp}| \\
		&=\sum_{i=1}^\ell |\langle S_i \rangle \cap H^{\perp}|-\ell +1.
	\end{split}
\end{equation*}
From equation (\ref{tsimcom}), we will discuss the minimum value of $|\Delta \cap H^{\perp}|$ for all $r$-dimensional subspaces of ${\mathbb F_q^m}$.

It is known that $\Delta= \cup_{i=1}^\ell \langle S_i\rangle$, which is contained in $S=\sum_{i=1}^\ell \langle S_i\rangle$. Since $|S_1|\leq |S_2|\leq \cdots \leq |S_\ell |$, by Remark~\ref{rem:V1Vn} we can find a largest subspace $W\subseteq \{ {\bf 0 } \} \cup S \setminus (\cup_{i=1}^\ell \langle S_i\rangle)$, which has the dimension $\sum_{i=1}^{\ell-1} |S_i|$, and the dual $S^{\perp}$ of $S$ has the dimension $m-\sum_{i=1}^\ell |S_i|$ since $S_i \cap S_j=\emptyset$ for any $1\leq i<j\leq \ell$. It is clear that $ W \cap S^\perp = \emptyset$ and
$$ {\rm dim}_{\bF_q}\,(W+S^{\perp})=m-|S_\ell|, \,\,  (W+S^{\perp})\cap \langle S_i\rangle =\{ {\bf 0} \} \ {\rm for} \ i=1, 2, \cdots \ell. $$
So, we have
$$\langle S_\ell \rangle^{\perp} = W+S^{\perp}\ {\rm and }\ \langle S_\ell \rangle^{\perp} \subseteq \langle S_{\ell -1}\rangle^{\perp} \subseteq \cdots \subseteq \langle S_1\rangle^{\perp}.$$

If $0\leq m-r\leq m-|S_\ell|$, i.e., $|S_\ell| \leq r\leq m$, then we can find an $H \in [{\mathbb F_q^m}, r]$ such that $H^{\perp} \subseteq W+S^{\perp}$ and $\langle S_i\rangle \cap H^{\perp}=\{ {\bf 0} \}$ for $i=1, 2, \cdots,\ell $. This $H$ minimizes $|\Delta \cap H^{\perp}|$ and ${\rm min}_{H\in [\bF_q^m, r]}\left( |\Delta \cap H^{\perp}|\right)=1$. From (\ref{tsimcom}) we have
$$d_r(\C_{\Delta^c})=q^m-\sum_{i=1}^\ell q^{|S_i|}-q^{m-r}+\ell. $$

If $m-|S_\ell| < m-r\leq m-|S_{\ell-1}|$, i.e., $|S_{\ell-1}|\leq r<|S_\ell|$, then $H^{\perp} \cap \langle S_\ell \rangle \neq \{ {\bf 0} \}$ since
$${\rm dim}_{\bF_q}\, H^{\perp}+{\rm dim}_{\bF_q}\, \langle S_\ell \rangle = m-r+|S_\ell| >m. $$
In this case, we can find an $H\in [\bF_q^m, r]$ such that
$$H^{\perp}=\langle S_\ell \rangle^{\perp}+H^{\prime} \subseteq \langle S_{\ell-1}\rangle^{\perp},$$
where $H^{\prime}\subseteq \langle S_\ell\rangle$ and has the dimension $m-r-m+|S_\ell|=|S_\ell|-r$, and this $H$ minimizes the intersections as follows:
$|\langle S_i\rangle \cap H^{\perp}|=1$ for $i=1, 2, \cdots, \ell-1$ and $|\langle S_\ell \rangle \cap H^{\perp}|=q^{|S_\ell|-r}$. So,
$$  \mathop{\rm min}\limits_{H\in [\bF_q^m, r]}\left( |\Delta \cap H^{\perp}|\right)=\ell-1+q^{|S_\ell|-r}-\ell+1=q^{|S_\ell|-r}.$$
By (\ref{tsimcom}) we have
$$d_r(\C_{\Delta^c})=q^m-\sum_{i=1}^\ell q^{|S_i|}-q^{m-r}+q^{|S_\ell|-r}+\ell-1.$$
	
Repeating the process above we can determine the minimum value of $|\Delta \cap H^{\perp}|$ for all $r$-dimensional subspaces $H$ with $1\leq r \leq m$. From (\ref{tsimcom}) we  obtain  Table \ref{table:comnmaxied}.
\end{proof}


\begin{example}
Let $\C_{\Delta^c}$ be a linear code defined in (\ref{definingset4}), where $\Delta=\langle S_1, S_2, S_3, S_4 \rangle $ is a simplicial complex of ${\mathbb F_3^5}$ generated by the support sets $S_1=\{1\}, S_2=\{2\}, S_3=\{3\}$ and $S_4=\{4, 5\}$. By the help of Magma, we have that $\C_{\Delta^c}$ is a $[228, 5, 150]$ code and has the weight hierarchy as follows: $d_1=150, d_2=202, d_3=220,d_4=226, d_5=228$. This result is consistent with Table \ref{table:comnmaxied} in Theorem \ref{ncom}.
\end{example}

\section{Concluding remarks}
In this paper, we discussed the generalized Hamming weight of linear codes from defining set construction and determined the weight hierarchy of linear codes constructed from simplicial complexes or their complements in $\bF_q^m$. These simplicial complexes are generalized by support sets of one, two, three, or a finite number of maximal elements. For the general case, it appears to be quite complex to express all the generalized Hamming weights of a linear code constructed from a simplicial complex in relation to the generators of the simplicial complex (i.e., the support sets of maximal elements). Interested readers are cordially invited to investigate the weight hierarchy of linear codes constructed from a general simplicial complex, or its complement in $\bF_q^m$.

\section{Appendix}

In Section 3, we determine the weight hierarchy of linear code $\C_{\Delta}$, where $\Delta$ is a simplicial complex generated by the support sets of three maximal elements, i.e.,
$\Delta=\langle S_1, S_2, S_3 \rangle$ for $|S_1|<|S_2|<|S_3|$. If two or three of the support sets have the same size, then $\C_{\Delta}$ has the different weight hierarchy, and
these cases are presented as follows:

{\bf (1)} If $|S_1|=|S_2|<|S_3|$, then $\C_{\Delta}$ has the weight hierarchy as that in Theorem~\ref{thm3}.

{\bf (2)} If $|S_1|<|S_2|=|S_3|$, then the following statements hold.
\begin{description}
\item [(i)] If $|S_1 \cap S_2|\leq |S_1\cap S_3|\leq |S_2\cap S_3|$, then $\C_{\Delta}$ has the weight hierarchy given in Table \ref{table:3maxied} of Theorem~\ref{thm3}.
\item [(ii)] If $|S_1\cap S_3|\leq |S_1\cap S_2|\leq |S_2\cap S_3|$, then $\C_{\Delta}$ has the weight hierarchy given in Table~\ref{APP1}.
 \item [(iii)] If $|S_1\cap S_2|\leq |S_2\cap S_3|\leq |S_1\cap S_3|$, then $\C_{\Delta}$ has the weight hierarchy given in Table \ref{APP2} of Theorem~\ref{thm3}.
 \item [(iv)] If $|S_2\cap S_3|\leq |S_1\cap S_2|\leq |S_1\cap S_3|$, then $\C_{\Delta}$ has the weight hierarchy given in Table \ref{APP2} of Theorem~\ref{thm3}.
 \item [(v)] If $|S_1\cap S_3|\leq |S_2\cap S_3|\leq |S_1\cap S_2|$, then $\C_{\Delta}$ has the weight hierarchy given in Table \ref{APP3}.
 \item [(vi)] If $|S_2\cap S_3|\leq |S_1\cap S_3|\leq |S_1\cap S_2|$, then $\C_{\Delta}$ has the weight hierarchy given in Table~\ref{APP3}.
\end{description}
 \begin{table}[h!]
     \centering
    \footnotesize
    \renewcommand{\arraystretch}{1.2}
    \setlength{\tabcolsep}{3pt}
	\caption{The weight hierarchy of $\C_{\Delta}$}
    \label{APP1}
	\scalebox{1}{
	\begin{tabular}{ll}
	\toprule
	\textbf{The weight hierarchy} & \textbf{Rang of $r$}\\
	\midrule
    \vspace{0.2cm}
	$q^{|S_1|}-q^{m-r-|S_2|-|S_3|+|S_1\cap S_2|+|S_1\cap S_3|+|S_2\cap S_3|-|\cap_{i=1}^3 S_i|}$ & $1\leq r\leq m-|S_2|-|S_3|+|S_2\cap S_3|$ \\
   \vspace{0.2cm}
	$q^{|S_1|}+q^{|S_3|}-q^{m-r-|S_3|+|S_2\cap S_3|}-q^{|S_1\cap S_2|+|S_1\cap S_3|-|\cap_{i=1}^3 S_i|}$ &  \makecell[l]{$m-|S_2|-|S_3|+|S_2\cap S_3|<r\leq $
	\\$m-|S_3|-|S_1\cap S_3|+|\cap_{i=1}^3 S_i|$}\\
     \vspace{0.2cm}
	\makecell[l]{$q^{|S_1|}+q^{|S_3|}-q^{|S_1\cap S_3|}-q^{m-r-|S_3|+|S_1\cap S_2|}
	-q^{m-r-|S_3|+|S_2\cap S_3|} $ \\
	$+q^{m-r-|S_3|+|\cap_{i=1}^3 S_i|}$} & \makecell[l]{ $m-|S_3|-|S_1\cap S_3|+|\cap_{i=1}^3 S_i| $\\
	 $<r\leq m-|S_3|$} \\
	\vspace{0.2cm}
	\makecell[l]{$q^{|S_1|}+q^{|S_2|}+q^{|S_3|}-q^{|S_1\cap S_2|}-q^{|S_1\cap S_3|}-q^{|S_2\cap S_3|}+q^{|\cap_{i=1}^3 S_i|}$\\
	 $-q^{m-r}$} & $m-|S_3|<r \leq m$\\
	\bottomrule
\end{tabular}
}
\end{table}
\begin{table}[h!]
     \centering
    \footnotesize
    \renewcommand{\arraystretch}{1.2}
    \setlength{\tabcolsep}{3pt}
	\caption{The weight hierarchy of $\C_{\Delta}$}
    \label{APP3}
	\scalebox{1}{
	\begin{tabular}{ll}
	\toprule
	\textbf{The weight hierarchy} & \textbf{Rang of $r$}\\
	\midrule
    \vspace{0.2cm}
	$q^{|S_1|}-q^{m-r-|S_2|-|S_3|+|S_1\cap S_2|+|S_1\cap S_3|+|S_2\cap S_3|-|\cap_{i=1}^3 S_i|}$ & $1\leq r\leq m-|S_2|-|S_3|+|S_2\cap S_3|$ \\
	\vspace{0.2cm}
	\makecell[l]{$q^{|S_1|}+q^{|S_2|}-q^{m-r-|S_3|-|S_1\cap S_3|+|\cap_{i=1}^3 S_i|}$ \\
	$-q^{|S_1\cap S_2|+|S_1\cap S_3|-|\cap_{i=1}^3 S_i|}$} &  \makecell[l]{$m-|S_2|-|S_3|+|S_2\cap S_3|<r< $
	\\$m-|S_1|-|S_3|+|S_2\cap S_3|$}\\
	\vspace{0.2cm}
	\makecell[l]{$q^{|S_2|}-q^{m-r-|S_1|-|S_3|+|S_1\cap S_2|+|S_1\cap S_3|+|S_2\cap S_3|-|\cap_{i=1}^3 S_i|}$ } & \makecell[l]{ $m-|S_1|-|S_3|+|S_2\cap S_3| $\\
	 $\leq r< m-|S_1|-|S_3|+|S_1\cap S_2|$} \\
	\vspace{0.2cm}
	$q^{|S_1|}+q^{|S_2|}-q^{|S_1\cap S_3|+|S_2\cap S_3|-|\cap_{i=1}^3 S_i|}-q^{m-r-|S_3|+|S_1\cap S_2|}$ & \makecell[l]{$m-|S_1|-|S_3|+|S_1\cap S_2|\leq r\leq $ \\
	$ m-|S_3|-|S_1\cap S_3|+|\cap_{i=1}^3 S_i|$ }   \\
    \vspace{0.2cm}
	\makecell[l]{$q^{|S_1|}+q^{|S_2|}-q^{|S_1\cap S_3|}-q^{m-r-|S_3|+|S_1\cap S_2|}-q^{m-r-|S_3|+|S_2\cap S_3|}$ \\
	$+q^{m-r-|S_3|+|\cap_{i=1}^3 S_i|}$ } & \makecell[l]{$m-|S_3|-|S_1\cap S_3|+|\cap_{i=1}^3 S_i| $ \\
	$<r\leq m-|S_3|$ } \\
	\vspace{0.2cm}
	\makecell[l]{$q^{|S_1|}+q^{|S_2|}+q^{|S_3|}-q^{|S_1\cap S_2|}-q^{|S_1\cap S_3|}-q^{|S_2\cap S_3|}+q^{|\cap_{i=1}^3 S_i|}$\\
	 $-q^{m-r}$} & $m-|S_3|<r \leq m$\\
	\bottomrule
\end{tabular}
}
\end{table}

{\bf (3)} If $|S_1|=|S_2|=|S_3|$, then $\C_{\Delta}$ has the weight hierarchy as that in {\bf (2)} except for the cases {\bf (iv)} and {\bf (vi)} of {\bf (2)}, and in both cases, $\C_{\Delta}$ has the weight hierarchy as follows.
 \begin{description}
 	\item [(i)]\ If $|S_2\cap S_3|\leq |S_1\cap S_2|\leq |S_1\cap S_3|$, then $\C_{\Delta}$ has the weight hierarchy given in Table \ref{APP4}.
 	\item [(ii)] If $|S_2\cap S_3|\leq |S_1\cap S_3|\leq |S_1\cap S_2|$, then $\C_{\Delta}$ has the weight hierarchy given in Table \ref{APP5}.
 \end{description}

 \begin{table}[h!]
     \centering
    \footnotesize
    \renewcommand{\arraystretch}{1.2}
    \setlength{\tabcolsep}{3pt}
	\caption{The weight hierarchy of $\C_{\Delta}$}
    \label{APP4}
	\scalebox{1}{
	\begin{tabular}{ll}
	\toprule
	\textbf{The weight hierarchy} & \textbf{Rang of $r$}\\
	\midrule
	\vspace{0.2cm}
	\makecell[l]{$q^{|S_2|}-q^{m-r-|S_1|-|S_3|+|S_1\cap S_2|+|S_1\cap S_3|+|S_2\cap S_3|-|\cap_{i=1}^3 S_i|}$ } & \makecell[l]{ $1\leq r\leq  m-|S_1|-|S_3|+|S_1\cap S_3|$} \\
	\vspace{0.2cm}
	$q^{|S_2|}+q^{|S_3|}-q^{|S_1\cap S_2|+|S_2\cap S_3|-|\cap_{i=1}^3 S_i|}-q^{m-r-|S_1|+|S_1\cap S_3|}$ & \makecell[l]{$m-|S_1|-|S_3|+|S_1\cap S_3|< r\leq $ \\
	$ m-|S_1|-|S_2\cap S_3|+|\cap_{i=1}^3 S_i|$ }   \\
    \vspace{0.2cm}
	\makecell[l]{$q^{|S_2|}+q^{|S_3|}-q^{|S_2\cap S_3|}-q^{m-r-|S_1|+|S_1\cap S_2|}-q^{m-r-|S_1|+|S_1\cap S_3|}$ \\
	$+q^{m-r-|S_1|+|\cap_{i=1}^3 S_i|}$ } & \makecell[l]{$m-|S_1|-|S_2\cap S_3|+|\cap_{i=1}^3 S_i| $ \\
	$<r\leq m-|S_1|$ } \\
	\vspace{0.2cm}
	\makecell[l]{$q^{|S_1|}+q^{|S_2|}+q^{|S_3|}-q^{|S_1\cap S_2|}-q^{|S_1\cap S_3|}-q^{|S_2\cap S_3|}+q^{|\cap_{i=1}^3 S_i|}$\\
	 $-q^{m-r}$} & $m-|S_1|<r \leq m$\\

	\bottomrule
\end{tabular}
}
\end{table}
 	\begin{table}[h!]
     \centering
    \footnotesize
    \renewcommand{\arraystretch}{1.2}
    \setlength{\tabcolsep}{3pt}
	\caption{The weight hierarchy of $\C_{\Delta}$}
    \label{APP5}
	\scalebox{1}{
	\begin{tabular}{ll}
	\toprule
	\textbf{The weight hierarchy} & \textbf{Rang of $r$}\\
	\midrule
	\vspace{0.2cm}
	\makecell[l]{$q^{|S_3|}-q^{m-r-|S_1|-|S_2|+|S_1\cap S_2|+|S_1\cap S_3|+|S_2\cap S_3|-|\cap_{i=1}^3 S_i|}$ } & \makecell[l]{ $1\leq r\leq  m-|S_1|-|S_2|+|S_1\cap S_2|$} \\
	\vspace{0.2cm}
	$q^{|S_2|}+q^{|S_3|}-q^{|S_1\cap S_3|+|S_2\cap S_3|-|\cap_{i=1}^3 S_i|}-q^{m-r-|S_1|+|S_1\cap S_2|}$ & \makecell[l]{$m-|S_1|-|S_2|+|S_1\cap S_2|< r\leq $ \\
	$ m-|S_1|-|S_2\cap S_3|+|\cap_{i=1}^3 S_i|$ }   \\
    \vspace{0.2cm}
	\makecell[l]{$q^{|S_2|}+q^{|S_3|}-q^{|S_2\cap S_3|}-q^{m-r-|S_1|+|S_1\cap S_2|}-q^{m-r-|S_1|+|S_1\cap S_3|}$ \\
	$+q^{m-r-|S_1|+|\cap_{i=1}^3 S_i|}$ } & \makecell[l]{$m-|S_1|-|S_2\cap S_3|+|\cap_{i=1}^3 S_i| $ \\
	$<r\leq m-|S_1|$ } \\
	\vspace{0.2cm}
	\makecell[l]{$q^{|S_1|}+q^{|S_2|}+q^{|S_3|}-q^{|S_1\cap S_2|}-q^{|S_1\cap S_3|}-q^{|S_2\cap S_3|}+q^{|\cap_{i=1}^3 S_i|}$\\
	 $-q^{m-r}$} & $m-|S_1|<r \leq m$\\

	\bottomrule
\end{tabular}
}
\end{table}

\begin{example}
Let $\C_{\Delta}$ be a binary linear code defined in (\ref{definingset3}), where $\Delta=\langle S_1,S_2,S_3\rangle $ is a simplicial complex of ${\mathbb F_2^6}$ generated by the support sets $S_1=\{1,2\}$, $S_2=\{2,3,4,5\}$, $S_3=\{1,3,4,6\}$. By the help of Magma, we know that $\C_{\Delta}$ is a $[29,6,8]$ code and has the weight hierarchy as follows: $d_1=8, d_2=13, d_3=21,d_4=25, d_5=27, d_6=28$. This result is consistent with Table \ref{APP1} in Appendix.
\end{example}

\begin{example}
Let $\C_{\Delta}$ be a binary linear code defined in (\ref{definingset3}), where $\Delta=\langle S_1,S_2,S_3\rangle $ is a simplicial complex of ${\mathbb F_2^7}$ generated by the support sets $S_1=\{1,2,3\}$, $S_2=\{1,2,4,5\}$, $S_3=\{3,4,6,7\}$. By the help of Magma, we know that $\C_{\Delta}$ is a $[33,7,8]$ code and has the weight hierarchy as follows: $d_1=8, d_2=12, d_3=17,d_4=25, d_5=29, d_6=31, d_7=32$. This result is consistent with Table \ref{APP3} in Appendix.
\end{example}

\begin{example}
Let $\C_{\Delta}$ be a binary linear code defined in (\ref{definingset3}), where $\Delta=\langle S_1,S_2,S_3 \rangle $ is a simplicial complex of ${\mathbb F_2^5}$ generated by the support sets $S_1=\{1,2,3\}$, $S_2=\{3,4,5\}$, $S_3=\{1,2,4\}$. By the help of Magma, we know that $\C_{\Delta}$ is a $[17,5,4]$ code and has the weight hierarchy as follows: $d_1=4, d_2=9, d_3=13,d_4=15, d_5=16$. This result is consistent with Table \ref{APP4} in Appendix.
\end{example}

\begin{example}
Let $\C_{\Delta}$ be a binary linear code defined in (\ref{definingset3}), where $\Delta=\langle S_1,S_2,S_3\rangle $ is a simplicial complex of ${\mathbb F_2^6}$ generated by the support sets $S_1=\{1,2,3,5\}$, $S_2=\{1,2,4,5\}$, $S_3=\{3,4,5,6\}$. By the help of Magma, we know that $\C_{\Delta}$ is a $[34,6,8]$ code and has the weight hierarchy as follows: $d_1=8, d_2=18, d_3=26,d_4=30, d_5=32, d_6=33$. This result is consistent with Table \ref{APP5} in Appendix.
\end{example}

\end{document}